\newcolumntype{L}[1]{>{\raggedright\arraybackslash}p{#1}}
\newcolumntype{C}[1]{>{\centering\arraybackslash}p{#1}}
\newcolumntype{R}[1]{>{\raggedleft\arraybackslash}p{#1}}
\begin{document}

\title{\Large On Spectral Graph Embedding: A Non-Backtracking Perspective and Graph Approximation}
\author{
Fei Jiang\thanks{Department of Computer Science, Peking University. allen.feijiang@gmail.com, jxu@pku.edu.cn}
\and 
Lifang He\thanks{Department of Healthcare Policy and Research, Weill Cornell Medical School, Cornell University. lifanghescut@gmail.com}
\and
Yi Zheng\footnotemark[1]
\and
Enqiang Zhu\thanks{School of Computer Science and Educational Software, Guangzhou University. zhuenqiang@gzhu.edu.cn}
\and
Jin Xu\footnotemark[1]
\and 
Philip S. Yu\thanks{Shanghai Institute for Advanced Communication and Data Science, Fudan University, Shanghai, China}
\thanks{Department of Computer Science, University of Illinois at Chicago. psyu@uic.edu}
}
\date{}
\maketitle


\begin{abstract} \small\baselineskip=9pt 
Graph embedding has been proven to be efficient and effective in facilitating graph analysis. In this paper, we present a novel spectral framework called NOn-Backtracking Embedding (NOBE), which offers a new perspective that organizes graph data at a deep level by tracking the flow traversing on the edges with backtracking prohibited. Further, by analyzing the non-backtracking process, a technique called graph approximation is devised, which provides a channel to transform the spectral decomposition on an edge-to-edge matrix to that on a node-to-node matrix. Theoretical guarantees are provided by bounding the difference between the corresponding eigenvalues of the original graph and its graph approximation. Extensive experiments conducted on various real-world networks demonstrate the efficacy of our methods on both macroscopic and microscopic levels, including clustering and structural hole spanner detection.
\end{abstract}
\section{Introduction}
Graph representations, which describe and store entities in a node-interrelated way \cite{shen2010} (such as adjacency matrix, Laplacian matrix, incident matrix, etc), provide abundant information for the great opportunity of mining the hidden patterns. However, this approach poses two principal challenges: 1) one can hardly apply off-the-shelf machine learning algorithms designed for general data with vector representations, and adapt them to the graph representations and 2) it's intractable for large graphs due to limited space and time constraints. Graph embedding can address these challenges by representing nodes using meaningful low-dimensional latent vectors. 

Due to its capability for assisting network analysis, graph embedding has attracted researchers' attention in recent years \cite{tang2015line, perozzi2014deepwalk, grover2016node2vec, wang2016structural}. The goal of a good graph embedding algorithm should be preserving both macroscopic structures (e.g., community structure) and microscopic structures (e.g., structural hole spanner) simultaneously. However, an artless graph embedding algorithm will lead to unsatisfactory low-dimensional embeddings in which meaningful information may lose or be indistinguishable. For example, the pioneering work \cite{tang2015line} mainly focusing on locally preserving the pairwise distance between nodes can result in the missing of dissimilarity. As a result, it may fail to preserve the community membership, as shown later in the experimental results in Table \ref{tab:modularity_community}. Some works \cite{grover2016node2vec, perozzi2014deepwalk} attempt to preserve high-order proximity between nodes by considering truncated random walk. Since conventional truncated random walk is a Markov chain without any examining the special structure of networks, key nodes information (e.g., structural hole spanners, outliers) will be unrecognizable. As far as we know, present approaches cannot achieve the graph embedding goal well. 
\begin{figure}[!tb]
\centering
\begin{subfigure}{0.44\linewidth}
\centering
\includegraphics[width=\linewidth]{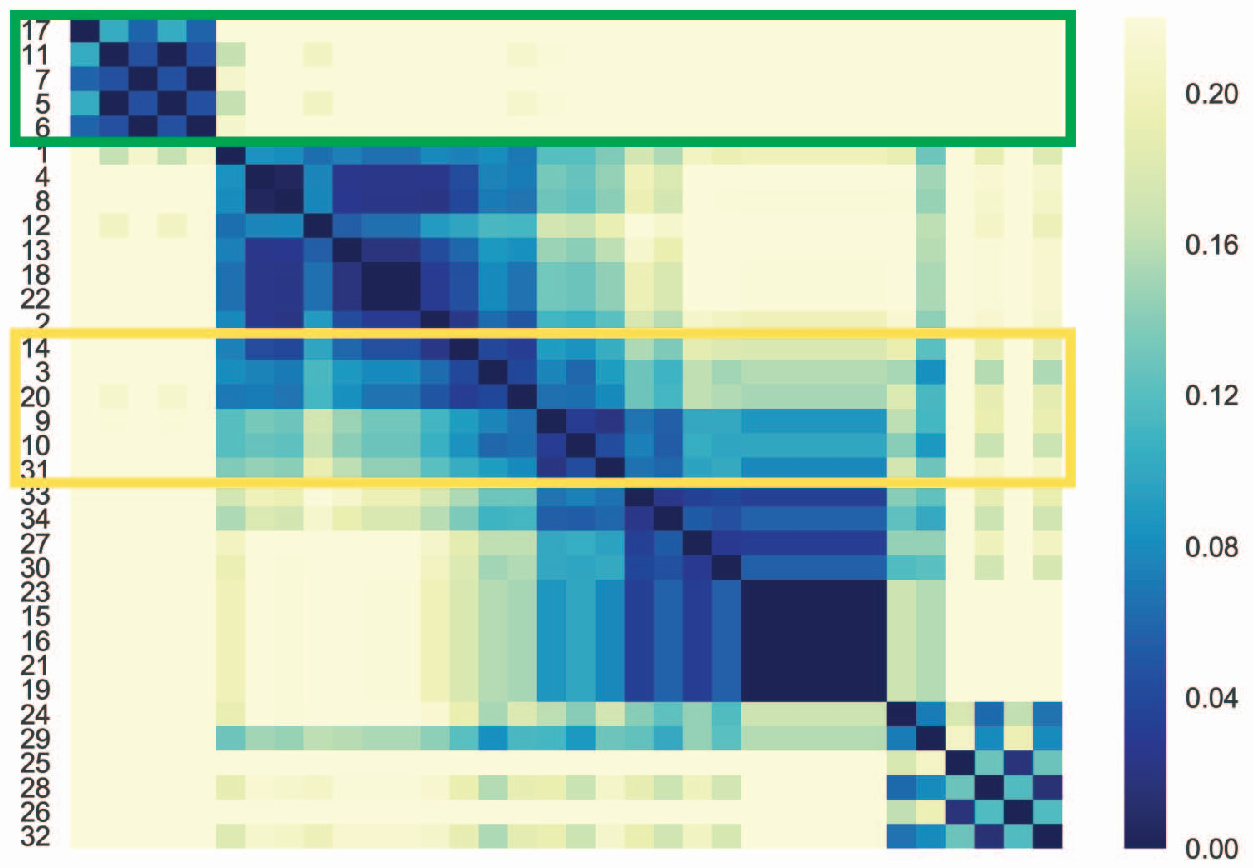}
\caption{}
\label{fig:map_example}
\end{subfigure}
\centering
\begin{subfigure}{0.44\linewidth}
\centering
\includegraphics[width=\linewidth]{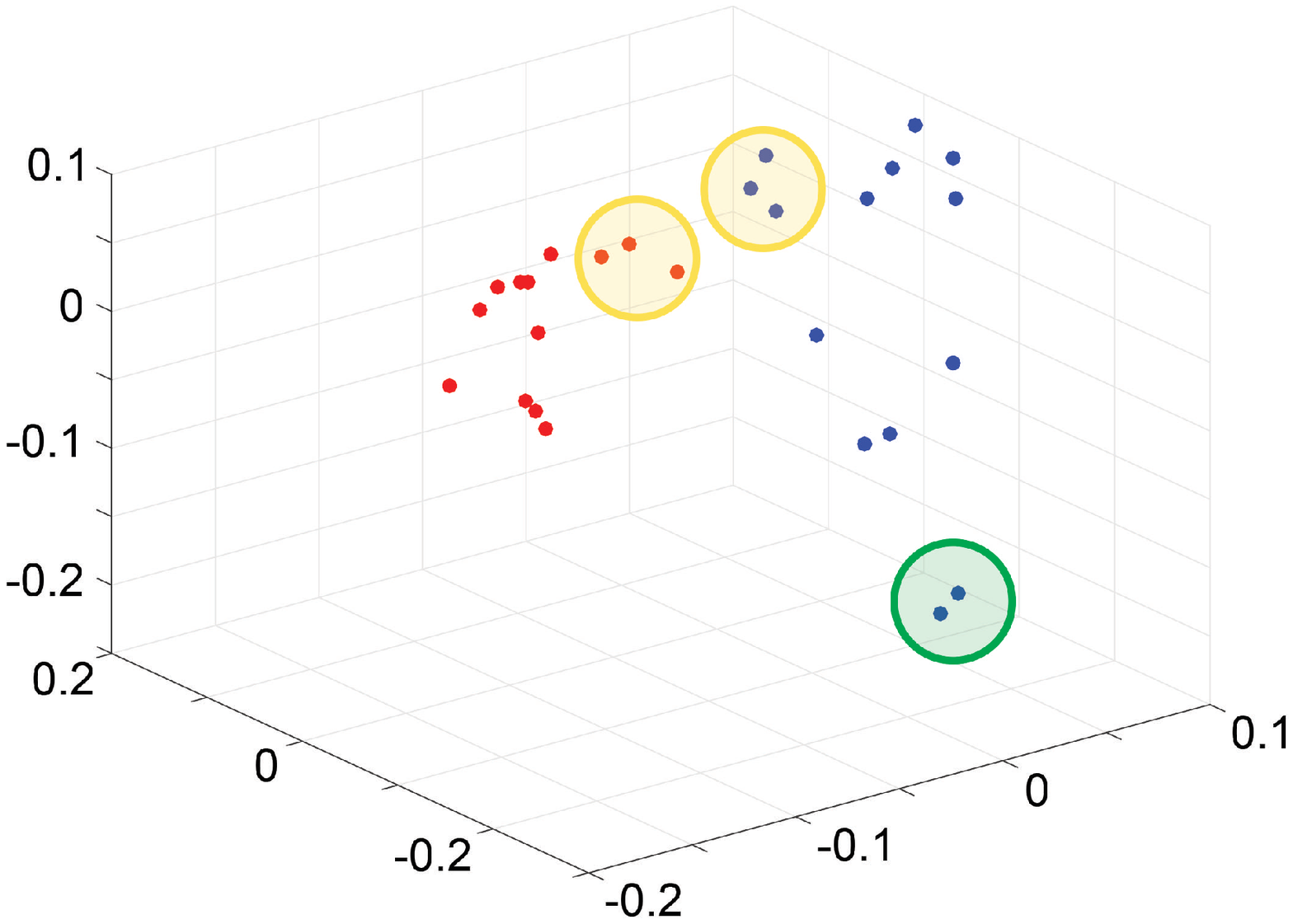}
\caption{}
\label{fig:point_example}
\vspace{-5pt}
\end{subfigure}
\caption{An embedding visualization of our method NOBE on \emph{karate} network. (a) The distance between the embedding vectors is plotted, where x and y axes represent node ID respectively. Community structure, structural holes (yellow frame) and outliers (green frame) are easily identified; (b) Ground truth communities are rendered in different colors, which is well preserved in the embedding subspace. Again, structural holes and outliers are marked with yellow and green circles respectively.}
\label{fig:karate}
\end{figure}

In this paper, to fulfill the goal of preserving macroscopic and microscopic structures, we propose a novel graph embedding framework NOBE and its graph approximation algorithm NOBE-GA. The main contributions of this paper are summarized as follows:
\begin{itemize}
\setlength{\itemsep}{5pt}
\setlength{\parsep}{5pt}
\setlength{\parskip}{5pt}

\item We develop an elegant framework NOn-Backtracking Embedding (NOBE), which jointly exploits a non-backtracking random walk and spectral graph embedding technique (in Section \ref{sec:Method}). The benefits of NOBE are: 1) From an edge perspective, we encode the graph data structure to an \emph{oriented line graph} by mapping each edge to a node, which facilitates to track the flow traversing on edges with backtracking prohibited. 2) By figuring out node embedding from the \emph{oriented line graph} instead of the original graph, community structure, structural holes and outliers can be well distinguished (as shown in Figures \ref{fig:karate} and \ref{fig:intuition}).




\item Graph approximation technique NOBE-GA is devised by analyzing the pattern of non-backtracking random walk and switching the order of spectral decomposition and summation (in Section \ref{sec:Approximations}). It reduces the complexity of NOBE with theoretical guarantee.
Specifically, by applying this technique, we found that conventional spectral method is just a reduced version of our NOBE method.

\item In section \ref{sec:Experiments}, we also design a metric RDS based on embedding community structure to evaluate the nodes' topological importance in connecting communities, which facilitates the discovery of structural hole (SH) spanners. Extensive experiments conducted on various networks demonstrate the efficacy of our methods in both macroscopic and microscopic tasks.
\end{itemize}

\section{Related Work}
\label{sec:related work}
Our work is mainly related to graph embedding and non-backtracking random walk. We briefly discuss them in this section.
\subsection{Graph Embedding}
Several approaches aim at preserving first-order and second-order proximity in nodes' neighborhood. \cite{wang2016structural} attempts to optimize it using semi-supervised deep model and \cite{tang2015line} focuses on large-scale graphs by introducing the edge-sampling strategy. To further preserve global structure of the graph, \cite{qiu2007clustering} explores the spectrum of the commute time matrix and  \cite{perozzi2014deepwalk} treats the truncated random walk with deep learning technique. Spectral method and singular value decomposition are also applied to directed graphs by exploring the directed Laplacian matrix  \cite{chen2007directed} or by finding the general form of different proximity measurements \cite{ou2016asymmetric}. Several works also consider joint embedding of both node and edge representations \cite{Xu2017,abu2017learning} to give more detailed results. By contrast, our work can address graph embedding using a more expressive and comprehensive spectral method, which gives more accurate vector representations  in a more explainable way yet with provable theoretical guarantees.

\subsection{Non-backtracking Random Walk}
Non-backtracking strategy is closely related to Ihara's zeta function, which plays a central role in several graph-theoretic theorems \cite{kempton2015high,bordenave2015non}. Recently, in machine learning and data mining fields, some important works have been focusing on developing non-backtracking walk theory. \cite{krzakala2013spectral} and \cite{saade2014spectral} demonstrate the efficacy of the spectrum of non-backtacking operator in detecting communities, which overcomes the theoretic limit of classic spectral clustering algorithm and is robust to sparse networks. \cite{morone2015influence} utilizes non-backtracking strategy in influence maximization, and the nice property of locally tree-like graph is fully exploited to complete the optimality proof. The study of eigenvalues of non-backtracking matrix of random graphs in \cite{bordenave2015non} further confirms the spectral redemption conjecture proposed in \cite{krzakala2013spectral} that above the feasibility threshold, community structure of the graph generated from stochastic block model can be accurately discovered using the leading eigenvectors of non-backtracking operator. However, to the best of our knowledge, there is no work done on analyzing the theory of non-backtracking random walk for graph embedding purposes.
\section{Methodology}
\label{sec:Method}
In this section, we firstly define the problem. Then, our NOBE framework is given in detail. At last, we present graph approximation technique, followed by a discussion. To facilitate the distinction, scalars are denoted by lowercase letters (e.g., $\lambda$), vectors by bold lowercase letters (e.g., $\mathbf{y}$,$\boldsymbol{\phi}$), matrices by bold uppercase letters (e.g., $\mathbf{W}$,$\mathbf{\Phi}$) and graphs by calligraphic letters (e.g., $\mathcal{G}$). The basic symbols used in this paper are also described in Table \ref{tab:Symbol}. 
\begin{table}[htbp]
\scriptsize
\caption{\bf List of basic symbols}
\centering 
\begin{tabular}{c|l}
\hline
Symbol&Definition\\
\hline
$\mathcal{G}$ & Original graph (edge set omitted as $\mathcal{G}=(V,\boldsymbol{\cdot},\mathbf{W})$)\\

$V,E,n,m$& Node, edge set and its corresponding volume in $\mathcal{G}$\\

$d_\mathcal{G}(v)$& Degree of node $v$ (without ambiguity denoted as $d(v)$)\\

$N(v)$ & the neighbor set of node $v$ in $\mathcal{G}$ \\

$\mathbf{A},\mathbf{W},\mathbf{D}$& Adjacency, weighted adjacency, diagonal degree matrices\\
\hline
$\mathcal{H}$ & Oriented line graph\\

$\mathbf{P}$& Non-backtracking transion matrix\\

$ \mathbf{\mathcal{L}}$ & Directed Laplacian matrix\\

$\boldsymbol{\phi}$& Perron vector\\

$\boldsymbol{\Phi}$& Diagonal matrix with entries $\Phi(v,v)=\phi(v)$ \\

\hline
\end{tabular}
\label{tab:Symbol}
\end{table}
\vspace{-10pt}
\subsection{Problem Formulation}
\label{sec:Problem_formulation}
Generally, a graph is represented as $\mathcal{G}=(V,E,\mathbf{W})$, where $V$ is set of nodes and $E$ is set of edges ($n=|V|$, $m=|E|$). When the weighted adjacency matrix  $\mathbf{W}$ (representing the strength of connections between nodes) is presented, edge set $E$ can be omitted as  $\mathcal{G}=(V,\boldsymbol{\cdot},\mathbf{W})$. Note that when $\mathcal{G}$ is undirected and unweighted, we use $\mathbf{A}$ instead of $\mathbf{W}$. Since most machine learning algorithms can not conduct on this matrix effectively, our goal is to learn low-dimensional vectors which can be fitted to them. Specifically, we focus on graph embedding to learning low-dimensional vectors, and simultaneously achieve two objectives: decoupling nodes' relations and dimension reduction. Graph embedding problem is formulated as:


\begin{Definition}
\label{def:embedding}
(\textbf{Graph Embedding}) Given a graph $\mathcal{G}=(V,E,\mathbf{W})$, for a fixed embedding dimension $k\ll n$, the purpose of graph embedding is to learn a mapping function $f(i|\mathbf{W}): i \rightarrow \mathbf{y}_i\in \mathbb{R}^k$, for $\forall i\in V$.
\end{Definition}

\subsection{Non-Backtracking Graph Embedding}
\label{sec:sub-nobe}
We proceed to present NOBE. Inspired by the idea of analyzing flow dynamics on edges, we first embed the graph into an intermediate space from a non-backtracking edge perspective. Then, summation over the embedding on edges is performed in the intermediate space to generate accurate node embeddings. In the following, we only elaborate the detail of embedding undirected unweighted graphs, while the case for weighted graphs is followed accordingly. We first define the concept of a non-backtracking transition matrix, which specifies the probabilities that the edges directed from one node to another with backtracking prohibited.

\begin{Definition}
(\textbf{Non-Backtracking Transition Matrix}) Given an undirected unweighted graph $\mathcal{G}=(V,E,\mathbf{A})$, we define its non-backtracking transition matrix $\mathbf{P}$ as a $2m \times 2m$ matrix, which can be regarded as a random walk on directed edges of graph $\mathcal{G}$ with backtracking prohibited. Mathematically,
\label{de:nbtm}
\begin{equation}
\label{eq:nbtm}
\mathbf{P}_{[(u\rightarrow v),(x\rightarrow y)]}=
\left\{
\begin{aligned}
&\frac{1}{d_\mathcal{G}(v)-1},   \quad \text{if $v = x$  and  $ u \neq y$}.\\
&0, \quad \quad \quad \quad  \quad otherwise.
\end{aligned}
\right.
\end{equation}
where $u$, $v$, $x$, $y$ $\in V$ and $(u\rightarrow v)$, $(x\rightarrow y)$ are edges with directions taken into consideration.
\end{Definition}
By encoding a graph into a non-backtracking transition matrix, it allows the diffusion dynamics to be considered. Notice that, different from the original non-backtracking operator \cite{krzakala2013spectral}, we also take the diffusion probability of the edges into account by the definition of $\mathbf{P}$. In this way, it enables us to capture more information for complex topological structure of the graph. 

Further, non-backtracking random walk is a non-Markovian chain, which uses non-backtracking transition matrix as its transition probability matrix. To make the analysis more tractable, we transform the non-Markovian process to a Markovian process by introducing an \emph{oriented line graph}.

\begin{Definition}
\label{de:olg}
(\textbf{Oriented Line Graph}) Given an undirected unweighted graph $\mathcal{G}=(V,E,\mathbf{A})$, its oriented line graph $\mathcal{H}=(\vec{E},\boldsymbol{\cdot},\mathbf{P})$ is a directed weighted graph, whose node set is the set of oriented edges in $\mathcal{G}$, and weighted adjacency matrix is the non-backtracking transition matrix.

\end{Definition}

Figure \ref{fig:intuition} illustrates the intuition behind the oriented line graph. It can be seen that the oriented line graph has the potential ability to characterize community boundary and emphasize structural hole spanners. An intuitive graph embedding approach is to perform spectral decomposition on non-backtracking transition matrix $\mathbf{P}$. However, $\mathbf{P}$ is an asymmetric matrix, so it is not guaranteed to have real eigenvalues and eigenvectors. Also, from the definition of $\mathbf{P}$, some terms in $\mathbf{P}$ is invalid at $d_\mathcal{G}(v)=1$, for $v\in V$. We propose the Proposition \ref{pro:md2} to make full use of $\mathbf{P}$ in spectral graph embedding.

\begin{figure}[!tb]
\centering
\includegraphics[width=\linewidth]{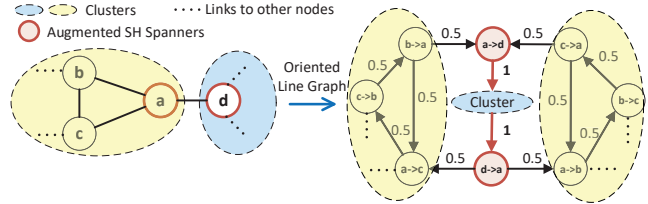}
\caption{{\bf An illustration of the intuition behind Oriented Line Graph}. Four nodes in two clusters are shown in the original graph, and edge weights, i.e., transition probabilities, are shown in the oriented line graph. If a walk from the yellow cluster randomly chooses to go to node $(a\rightarrow d)$, it must be followed by a series of walks inside the blue cluster, since backtracking is prohibited (indicated by the edge weight of $1$), and vice versa. Moreover, structural hole spanners become more evident, as node $(a\rightarrow d)$ and node $(d\rightarrow a)$ are put into crucial positions with more concentrated edges weights.}
\label{fig:intuition}
\end{figure}
\begin{proposition}
\label{pro:md2}
If the minimum degree of the connected graph $\mathcal{G}$ is at least 2, then the oriented line graph $\mathcal{H}$ is valid and strongly connected.
\end{proposition}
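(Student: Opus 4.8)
The statement has two parts, and I would dispatch them in order of difficulty. \textbf{Validity} is the routine half: the only ways $\mathcal{H}$ (equivalently $\mathbf{P}$) can fail to be well-defined are a vanishing denominator $d_\mathcal{G}(v)-1$ or an oriented edge with no admissible successor. The hypothesis $d_\mathcal{G}(v)\ge 2$ for all $v$ kills both at once: every factor $d_\mathcal{G}(v)-1\ge 1$ is strictly positive, and for each oriented edge $(u\rightarrow v)$ the successor set $\{(v\rightarrow y):y\in N(v)\setminus\{u\}\}$ is nonempty since $v$ has a neighbour other than $u$. A one-line count then shows the row of $\mathbf{P}$ indexed by $(u\rightarrow v)$ has exactly $d_\mathcal{G}(v)-1$ nonzero entries, each equal to $1/(d_\mathcal{G}(v)-1)$, so $\mathbf{P}$ is genuinely row-stochastic and $\mathcal{H}$ is a legitimate weighted digraph.

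For \textbf{strong connectivity} I must show that between any two oriented edges there is a directed (i.e.\ non-backtracking) walk. The plan is to reduce this to two ingredients. First, every arc of $\mathcal{H}$ has the shape $(u\rightarrow v)\rightarrow(v\rightarrow w)$ with $w\ne u$, so along any \emph{simple} path of $\mathcal{G}$ one may follow consecutive oriented edges without obstruction (three consecutive distinct vertices never force an immediate backtrack); by connectivity of $\mathcal{G}$ this routes freely through the interior of a path. Second, I need a \emph{reversal} step, namely a directed walk from $(u\rightarrow v)$ to $(v\rightarrow u)$, to repair the two endpoints of such a path where an orientation may be incompatible. Granting reversal, any two oriented edges are joined by following a simple path connecting their underlying edges and reversing at the ends as needed, which places all $2m$ nodes of $\mathcal{H}$ in a single strongly connected component.

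The crux is the reversal lemma, and this is where the degree hypothesis does its real work. To produce $(v\rightarrow u)$ it suffices to reach $v$ along some oriented edge $(z\rightarrow v)$ with $z\ne u$ and then append the arc to $(v\rightarrow u)$. Starting from $(u\rightarrow v)$ I step out to $(v\rightarrow w)$ for some $w\in N(v)\setminus\{u\}$, which exists because $d_\mathcal{G}(v)\ge 2$, and then keep extending a non-backtracking walk. Since every vertex has degree at least $2$, such a walk can always be prolonged and hence never dies; in a finite graph it must eventually repeat an oriented edge, so it contains a closed non-backtracking circuit. The substantive claim is that this circuit can be chosen to re-enter $v$ through a neighbour different from $u$.

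I expect this re-entry to be the main obstacle, and it is genuinely delicate, because it fails exactly when $\mathcal{G}$ is a bare cycle: there every vertex has degree precisely $2$, the non-backtracking walk is forced to circulate in a single orientation, and $(u\rightarrow v)$ can never reach $(v\rightarrow u)$. Thus the careful form of the argument uses that $\mathcal{G}$ is connected with minimum degree at least $2$ \emph{and} is not a single cycle — equivalently, that some vertex has degree at least $3$ — so that near such a branch vertex the walk has the freedom to turn around and return to $v$ from the far side. I would make the lemma precise by tracking how the extended walk, once it reaches a vertex of degree $\ge 3$, can be steered back toward $v$ along an alternative edge, and then separately check that the bare cycle is the unique configuration the degree-$2$ hypothesis alone fails to exclude.
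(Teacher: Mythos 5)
Your decomposition — validity, then routing along simple paths of $\mathcal{G}$ (which automatically respects the non-backtracking constraint since three consecutive vertices of a simple path are distinct), then a reversal lemma to repair the endpoints — is genuinely different from the paper's argument, which instead runs a three-way case analysis on how the tail and head of the source arc $(w\rightarrow x)$ coincide with those of the target arc $(u\rightarrow v)$, deleting a vertex or a pair of edges in each case and routing a simple path through what remains. Your route is the sounder one, and your central observation is correct and important: the proposition as stated is \emph{false} when $\mathcal{G}$ is a cycle, since there every non-backtracking walk is forced to circulate in a single orientation and the two orientations form two disjoint strongly connected components of $\mathcal{H}$; the correct hypothesis, standard in the non-backtracking literature, is connected, minimum degree at least $2$, \emph{and not a cycle}. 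The paper's proof misses this precisely because its case analysis is not exhaustive — the configuration $x=u$ and $w=v$, i.e.\ reaching the reversal of the starting arc, falls into none of its three cases — and the cases it does treat rest on false claims: that $\mathcal{G}$ minus a vertex, or minus two edges, remains connected under the minimum-degree hypothesis (two triangles sharing a vertex, or two cycles joined by a bridge, are counterexamples), together with an invocation of a ``Hamilton path'' where only a simple path is meant or available.

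That said, your proposal is not yet a proof even of the corrected statement, because the reversal lemma — from $(u\rightarrow v)$ one can reach $(v\rightarrow u)$ whenever some vertex has degree at least $3$ — is only asserted with a plan (``steer the walk back toward $v$ along an alternative edge''), and this lemma carries essentially the entire content of strong connectivity. To close it you need an explicit construction: for instance, follow a non-backtracking route from $v$ to a vertex $z$ of degree at least $3$; use the extra edge at $z$ to enter a closed non-backtracking circuit through $z$ (such a circuit exists because minimum degree $2$ means a non-backtracking walk never terminates and must eventually close up); arrange to return to $z$ along an edge different from the one you left on; and then retrace the approach path in reverse so as to re-enter $v$ through a neighbour other than $u$. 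Several of these steps — returning to $z$ on a different edge, and handling the case where the approach path itself passes through $v$ or uses the edge $\{u,v\}$ — are exactly the delicate points you flag, and until they are written out the argument is a correct diagnosis and a sound plan rather than a proof.
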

\begin{proof}
The proof is given in the Appendix A.1. 
\end{proof}
Proposition \ref{pro:md2} also means that under this condition the non-backtracking transition matrix $\mathbf{P}$ is irreducible and aperiodic. In particular, according to the Perron-Frobenius Theorem \cite{horn2012matrix}, it implies that for a strongly connected oriented line graph with non-negative weights, matrix $\mathbf{P}$ has a unique left eigenvector $\boldsymbol{\phi}$ with all entries positive. Let us denote $r$ as the largest real eigenvalue of matrix $\mathbf{P}$. Then,
\begin{equation*}
\boldsymbol{\phi} ^T \mathbf{P} =r \boldsymbol{\phi} ^T.
\end{equation*}
For directed weighted graphs, nodes' importance in topological structure is not determined by the degree of nodes as in an undirected graph, since directed edges coming in or going out of a node may be blocked or be rerouted back immediately in the next path. Hence, as discussed in lots of literatures \cite{chung2005laplacians}, we use Perron vector $\boldsymbol{\phi} $ to denote node importance in the oriented line graph. Our objective is to ensure that linked nodes in oriented line graph should be embedded into a close location in the embedding space. 

Suppose we want to embed nodes in the oriented line graph into one dimensional vector $\mathbf{y}$. Regarding to each edge $(e_1,e_2)$ in the oriented line graph $\mathcal{H}$, by considering its weights, our goal is to minimize $(y(e_1)-y(e_2))^2 P(e_1,e_2)$. Taking source nodes' importance indicated by $\boldsymbol{\phi}$ into consideration and summing the loss over all edges, we define our loss function as
\begin{equation}
\label{eq:loss-1}
\min_{\mathbf{y}}\sum\limits_{(e_1,e_2)\in E(\mathcal{H})}\phi(e_1)[(y(e_1)-y(e_2))^2 P(e_1,e_2)].
\end{equation}
Specifically, the Eq. (\ref{eq:loss-1}) can be written in a matrix form by the following proposition.
\begin{proposition}
\label{pro:loss-2}
Eq (\ref{eq:loss-1}) has the following form
\begin{equation}
\label{eq:loss-2}
\min_{\mathbf{y}} \mathbf{y}^T\mathcal{L}\mathbf{y},
\end{equation}
\end{proposition}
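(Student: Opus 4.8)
The plan is to expand the squared difference in Eq.~(\ref{eq:loss-1}), collect the three resulting sums into matrix quadratic forms, and recognize the outcome as (a scalar multiple of) the combinatorial directed Laplacian $\mathcal{L}=\boldsymbol{\Phi}-\tfrac{1}{2}(\boldsymbol{\Phi}\mathbf{P}+\mathbf{P}^T\boldsymbol{\Phi})$ of Chung~\cite{chung2005laplacians}. Writing $e_1$ for the source and $e_2$ for the target of each arc of $\mathcal{H}$, the expansion $(y(e_1)-y(e_2))^2=y(e_1)^2-2y(e_1)y(e_2)+y(e_2)^2$ splits the objective into a source self-term, a cross term, and a target self-term, each weighted by $\phi(e_1)P(e_1,e_2)$.

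First I would dispatch the two self-terms using two structural facts about $\mathbf{P}$. Since each arc $(u\to v)$ admits exactly $d_\mathcal{G}(v)-1$ continuations, each of weight $1/(d_\mathcal{G}(v)-1)$, the matrix $\mathbf{P}$ is row-stochastic, so $\sum_{e_2}P(e_1,e_2)=1$ and the source term collapses to $\sum_{e_1}\phi(e_1)y(e_1)^2=\mathbf{y}^T\boldsymbol{\Phi}\mathbf{y}$. For the target term I would instead sum over $e_1$ first and invoke the Perron relation $\boldsymbol{\phi}^T\mathbf{P}=r\boldsymbol{\phi}^T$; because row-stochasticity together with irreducibility (Proposition~\ref{pro:md2}) pins the Perron value to $r=1$, this gives $\sum_{e_1}\phi(e_1)P(e_1,e_2)=\phi(e_2)$, so the target term likewise equals $\mathbf{y}^T\boldsymbol{\Phi}\mathbf{y}$.

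Next the cross term $-2\sum_{(e_1,e_2)}\phi(e_1)P(e_1,e_2)y(e_1)y(e_2)$ is exactly $-2\,\mathbf{y}^T\boldsymbol{\Phi}\mathbf{P}\mathbf{y}$, since $(\boldsymbol{\Phi}\mathbf{P})_{e_1,e_2}=\phi(e_1)P(e_1,e_2)$. As this is a scalar it equals its own transpose, so $\mathbf{y}^T\boldsymbol{\Phi}\mathbf{P}\mathbf{y}=\tfrac{1}{2}\mathbf{y}^T(\boldsymbol{\Phi}\mathbf{P}+\mathbf{P}^T\boldsymbol{\Phi})\mathbf{y}$, which symmetrizes the form. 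Adding the three pieces yields $2\mathbf{y}^T\boldsymbol{\Phi}\mathbf{y}-\mathbf{y}^T(\boldsymbol{\Phi}\mathbf{P}+\mathbf{P}^T\boldsymbol{\Phi})\mathbf{y}=2\,\mathbf{y}^T\mathcal{L}\mathbf{y}$, and since the constant factor $2$ does not move the minimizer, Eq.~(\ref{eq:loss-1}) and Eq.~(\ref{eq:loss-2}) share the same argmin.

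The computation is routine once the index bookkeeping is fixed; the step I would watch most carefully is the asymmetric treatment of $e_1$ and $e_2$ in the two self-terms. The source term needs only row-stochasticity, while the target term needs the left-Perron/stationarity identity, and it is precisely the coincidence that both weightings reduce to $\boldsymbol{\Phi}$ (which hinges on $r=1$) that lets the bare $\boldsymbol{\Phi}\mathbf{P}$ symmetrize into Chung's directed Laplacian rather than a skewed variant. I would therefore state the $r=1$ claim explicitly and justify it from Proposition~\ref{pro:md2} before assembling the final identity.
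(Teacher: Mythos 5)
Your proposal is correct and follows essentially the same route as the paper's proof: expand the square, collapse the source self-term via row-stochasticity of $\mathbf{P}$, collapse the target self-term via the left-Perron identity $\boldsymbol{\phi}^T\mathbf{P}=\boldsymbol{\phi}^T$ (with $r=1$), and symmetrize the cross term to obtain $\boldsymbol{\Phi}-\tfrac{1}{2}(\boldsymbol{\Phi}\mathbf{P}+\mathbf{P}^T\boldsymbol{\Phi})$ up to an irrelevant constant factor. The only cosmetic difference is that the paper symmetrizes at the outset by splitting the sum into two ordered halves, whereas you symmetrize the cross term at the end using the scalar-transpose identity; your explicit remark that $r=1$ must be pinned down from stochasticity and irreducibility is a point the paper leaves implicit.
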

where $\mathcal{L}=\boldsymbol{\Phi} - \frac{\boldsymbol{\Phi} \mathbf{P}+\mathbf{P}^T\boldsymbol{\Phi}}{2}
$ is called \emph{combinatorial Laplacian} for directed graphs, and $\boldsymbol{\Phi}$ is a diagonal matrix with $\Phi(u,u)=\phi(u)$. 
\begin{proof}
The proof is given in the Appendix A.2.
\end{proof}

Following the idea of  \cite{chung2005laplacians}, we consider the Rayleigh quotient for directed graphs as follows:
$$
R(y)=\frac{\mathbf{y}^T\mathbf{\mathcal{L}}\mathbf{y}}{\mathbf{y}^T\boldsymbol{\Phi }\mathbf{y}}.
$$
The denominator of Rayleigh quotient takes the amount of weight distribution in the directed graph indicated by $\boldsymbol{\Phi}$ into account. Therefore, we add $\mathbf{y}^T\boldsymbol{ \Phi}\mathbf{y}=1$ as a constraint, which can eliminate the arbitrary rescaling caused by $\mathbf{y}$ and $\boldsymbol{\Phi}$. By solving Eq. (\ref{eq:loss-2}) with this constraint, we get the following eigenvector problem:
\begin{equation}
\label{eq:eigen_phi_L}
(\boldsymbol{\Phi}^{-1}\mathbf{\mathcal{L}})\mathbf{y}=\lambda \mathbf{y}.
\end{equation}
It is now clear that our task of this stage becomes selecting smallest eigenvectors of $\boldsymbol{\Phi^{-1}}\mathbf{\mathcal{L}}$ to form a vector representation for directed edges from non-backtracking perspective. By using the following proposition, we can further reduce the $\boldsymbol{\Phi^{-1}}\mathbf{\mathcal{L}}$ matrix into a more concise and elegant form.
\begin{proposition}
\label{pro:doublesum}
Both the sums of rows and columns of the non-backtracking transition matrix $\mathbf{P}$ equal one. That is, $\textbf{1}^T\mathbf{P}=\textbf{1}^T$, where $\textbf{1}$ is a column vector of ones.
\end{proposition}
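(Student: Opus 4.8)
The plan is to verify both the row-sum and column-sum claims directly from the definition of $\mathbf{P}$ in Eq.~(\ref{eq:nbtm}), treating each as an explicit counting argument over the nonzero entries. Since $\mathbf{P}$ is indexed by oriented edges, I would index a generic row by $(u \to v)$ and a generic column by $(x \to y)$, and in each case identify precisely which entries are nonzero and what common value they take.

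For the row sum, fix the row $(u \to v)$. By definition, $\mathbf{P}_{[(u\to v),(x\to y)]}$ is nonzero exactly when $x = v$ and $y \neq u$; that is, the successor edge must leave $v$ without returning to $u$. The admissible successors are therefore the edges $(v \to y)$ with $y \in N(v) \setminus \{u\}$, of which there are exactly $d_{\mathcal{G}}(v) - 1$, and each carries the common weight $\tfrac{1}{d_{\mathcal{G}}(v) - 1}$. Summing yields $(d_{\mathcal{G}}(v)-1)\cdot \tfrac{1}{d_{\mathcal{G}}(v)-1} = 1$, so every row sums to one (equivalently $\mathbf{P}\mathbf{1} = \mathbf{1}$); this merely reflects that $\mathbf{P}$ is the transition matrix of the non-backtracking walk.

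For the column sum, which gives the stated identity $\mathbf{1}^T\mathbf{P} = \mathbf{1}^T$, fix the column $(x \to y)$. Now $\mathbf{P}_{[(u\to v),(x\to y)]}$ is nonzero precisely when $v = x$ and $u \neq y$, so the contributing predecessor edges are those of the form $(u \to x)$ with $u \in N(x) \setminus \{y\}$. The decisive observation is that the transition probability $\tfrac{1}{d_{\mathcal{G}}(v)-1}$ depends only on the shared middle node $v = x$, not on the particular incoming edge, so every nonzero entry of this column equals the same constant $\tfrac{1}{d_{\mathcal{G}}(x)-1}$. Since $(x \to y)$ is a genuine oriented edge we have $y \in N(x)$, hence there are exactly $d_{\mathcal{G}}(x) - 1$ admissible predecessors, and the column sum is again $(d_{\mathcal{G}}(x)-1)\cdot \tfrac{1}{d_{\mathcal{G}}(x)-1} = 1$.

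I expect the only real subtlety, and the step I would flag as the main obstacle, to be the column-sum computation, since double-stochasticity is not automatic for a transition matrix and here hinges on the uniformity of the weights. The crux is recognizing that the weight $\tfrac{1}{d_{\mathcal{G}}(v)-1}$ is constant along a column (pinned by the middle node $x$) and that, after excluding the backtracking predecessor $u = y$, the number of valid predecessors is exactly $d_{\mathcal{G}}(x)-1$. Provided the minimum degree is at least $2$ so that no denominator $d_{\mathcal{G}}(v)-1$ vanishes, as guaranteed by Proposition~\ref{pro:md2}, both counts are well defined and the identity follows.
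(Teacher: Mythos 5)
Your proof is correct and follows essentially the same counting argument as the paper's: identify the nonzero entries of a fixed row or column, observe they all share the common value $\tfrac{1}{d_{\mathcal{G}}(v)-1}$, and count that there are exactly $d_{\mathcal{G}}(v)-1$ of them. The only difference is that the paper dispatches the column case with ``similarly,'' whereas you spell out the (correct) observation that the weight is pinned by the middle node and hence constant along a column.
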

\begin{proof}
The proof is given in the Appendix A.3.
\end{proof}
From the Proposition \ref{pro:doublesum}, we know that $\textbf{1}$ is a Perron vector of $\mathbf{P}$. By normalizing $\boldsymbol{\phi}$ (subject to $\sum_{e}\phi(e) =1$), we can further have $\phi(e)=\frac{1}{2m}$ for each node $e\in\vec{E}$. Then, we have
\begin{equation}
\label{eq:phi_l_tildeL}
\boldsymbol{\Phi}^{-1}\mathbf{\mathcal{L}}=\widetilde{\mathcal{L}},
\end{equation}
where $\widetilde{\mathcal{L}}=\mathbf{I}-\frac{\mathbf{P}+\mathbf{P}^T}{2}$, and can be thought of as a normalized Laplacian matrix for oriented line graphs, compared to traditional Laplacian matrix for undirected graphs. $\frac{\mathbf{P}+\mathbf{P}^T}{2}$ can be regarded as a symmetrication process on a $2m \times 2m$ matrix. Specifically, this process will be equivalent to neutralize the weights between zero and $P_{(e_1,e_2)}$ if $P_{(e_1,e_2)}$ is not null, since $P_{(e_1,e_2)}$ and $P_{(e_2,e_1)}$ cannot be nonzero at the same time. 

According to Eq \ref{eq:eigen_phi_L} and Eq \ref{eq:phi_l_tildeL}, we could obtain $k$-dimensional embedding vectors of directed edges by computing $k$ smallest non-trivial eigenvectors of $\widetilde{\mathbf{\mathcal{L}}}$. 
By summing these results over related embedding vectors, we can obtain node embeddings of graph $\mathcal{G}$. Here we introduce two sum rules: \emph{in-sum} and \emph{out-sum}. Suppose we have got a one-dimensional vector of the embedding of edges denoted by $\mathbf{g}$. For any node $u$, we define the rule of \emph{in-sum} by $g^{in}_u=\sum_{v\in N(u)}g_{v\rightarrow u}$, which sums of all the incoming edges' embeddings associated with $u$. We define the rule of \emph{out-sum} by $g^{out}_u=\sum_{v\in N(u)}g_{u\rightarrow v}$, which sums of all the outgoing edges' embeddings associated with $u$. Our graph embedding algorithm is described in algorithm \ref{alg:nobe}. 
\begin{algorithm}[!tb]
\caption{\textbf{NOBE}: NOn-Backtracking Embedding }
\label{alg:nobe}
\begin{algorithmic}[1] 
\REQUIRE ~~Graph $\mathcal{G}=(V,E)$; Embedding dimension $k$
\ENSURE  Set of embedding vectors $(\mathbf{y}_1,\mathbf{y}_2,\cdots,\mathbf{y}_n)$

\STATE Preprocess original graph $\mathcal{G}$ to meet the requirement of Proposition \ref{pro:md2}
\STATE Initialize the non-backtracking transition matrix $\mathbf{P}$ by definition \ref{de:nbtm}
\STATE Compute the second to the $k+1$ smallest eigenvectors of matrix $\widetilde{\mathbf{\mathcal{L}}}=\mathbf{I}-\frac{\mathbf{P}+\mathbf{P}^T}{2}$, denoted by $\mathbf{g}[1],\mathbf{g}[2],\cdots,\mathbf{g}[k]$
\STATE For every $u\in V$ at every dimension $i\in [1,\cdots,k]$, i.e., $y_u(i)=g[i]^{in}_u=\sum_{v\in N(u)}g[i]_{v\rightarrow u}$, based on the \emph{in-sum} rule.

\RETURN$(\mathbf{y}_1,\mathbf{y}_2,\cdots,\mathbf{y}_n)$
\end{algorithmic}
\end{algorithm} 
\normalsize
\subsection{Graph Approximation}
\label{sec:Approximations}
In the previous part, we present a spectral graph embedding algorithm NOBE, which can preserve both macroscopic and microscopic structures of the original graph. The main procedure of NOBE uses a two-step operation sequentially: eigenvector decomposition and summation of incoming edge embeddings. The first step is conducted on a $2m \times 2m$ matrix $\widetilde{\mathcal{L}}$, which is equivalent to compute the several largest eigenvectors on $\frac{\mathbf{P}+\mathbf{P}^T}{2}$, denoted as $\mathbf{\overline{P}}$. In this section, we will show how to speedup the algorithm by reversing these two steps. By graph approximation technique, we present an eigenvector decomposition algorithm acting on a $2n \times 2n$ matrix with provable approximation guarantees.

Suppose that $\mathbf{g}$ is an eigenvector of $\mathbf{\overline{P}}$ of $2m$ dimensions on directed edges, then based on the definition of \emph{in-sum} and \emph{out-sum}, $\mathbf{g}^{in}$ and $\mathbf{g}^{out}$ are vectors of $n$ dimensions after performing \emph{in-sum} and \emph{out-sum} operations. If these exists a $2n\times 2n$ matrix $\mathbf{T}$ and a $2m\times 2m$ matrix $\mathbf{Q}$, such that
\begin{equation}
\left(
\begin{aligned}
&(\mathbf{g}^T\mathbf{\overline{P}})^{in} \\
&(\mathbf{g}^T\mathbf{\overline{P}})^{out}
\end{aligned}
\right)
\approx
\left(
\begin{aligned}
&(\mathbf{g}^T\mathbf{Q})^{in} \\
&(\mathbf{g}^T\mathbf{Q})^{out}
\end{aligned}
\right)
=\left(
\mathbf{T}
\left(
\begin{aligned}
&\mathbf{g}^{in} \\
&\mathbf{g}^{out}
\end{aligned}
\right)
\right).
\end{equation}
This implies that if matrix $\mathbf{Q}$ adequately approximates $\mathbf{\overline{P}}$, without operating on matrix $\boldsymbol{\overline{P}}$, one can perform spectral decomposition directly on $\mathbf{T}$, which is much smaller than $\mathbf{\overline{P}}$, to get $\mathbf{g}^{in}$ and $\mathbf{g}^{out}$. We can view $\mathbf{T}$ as an aggregating version of matrix $\mathbf{\overline{P}}$, which means that $\mathbf{T}$ contains almost the same amount of information as $\mathbf{\overline{P}}$ for our embedding purpose. Next, to compose $\mathbf{T}$ matrix, for any node $u$, we consider its \emph{out-sum} operation when applying $\mathbf{\overline{P}}$ matrix on it.

\begin{lemma}
\label{lemma: noise}
There exists a $2m\times 2m$ matrix $\mathbf{Q}=\mathbf{\overline{P}}+\mathbf{\Delta}$, such that for arbitrary node $x$, $u$, $v\in V$, $|\mathbf{\Delta}_{[(x\rightarrow u),(u\rightarrow v)]}|\leq \frac{1}{2(d(u)-1))(d(v)-1)}$ if $\overline{P}_{[(x\rightarrow u),(u\rightarrow v)]}\neq 0$, otherwise 0. Moreover, $(\mathbf{g}^T\mathbf{\overline{P}})^{out}_{u}$  can be approximated as $(\mathbf{g}^T\mathbf{\overline{P}})^{out}_{u}\approx(\mathbf{g}^T\mathbf{Q})^{out}_{u}=(\frac{1}{2}-\sum_{v\in N(u)}\frac{1}{2(d(v)-1)}\frac{1}{d(u)})g^{in}_u+\sum_{v\in N(u)}\frac{1}{2(d(v)-1)}g^{out}_v$.
\end{lemma}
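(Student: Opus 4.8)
The plan is to derive the displayed formula for $(\mathbf{g}^T\mathbf{Q})^{out}_u$ by first evaluating the exact action $(\mathbf{g}^T\mathbf{\overline{P}})^{out}_u$ in closed form, then applying a single averaging (mean-field) approximation, and finally reading off $\mathbf{\Delta}=\mathbf{Q}-\mathbf{\overline{P}}$ and bounding its entries. Everything is driven by the explicit support of $\mathbf{P}$ in Definition~\ref{de:nbtm} together with $\mathbf{\overline{P}}=\tfrac{1}{2}(\mathbf{P}+\mathbf{P}^T)$, so no spectral properties of $\mathbf{g}$ are needed for the algebra; $\mathbf{g}$ enters only through the aggregates $g^{in}$ and $g^{out}$.

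First I would expand the entry of the row vector $\mathbf{g}^T\mathbf{\overline{P}}$ indexed by a directed edge $(u\rightarrow v)$. Writing $\overline{P}_{[e',(u\rightarrow v)]}=\tfrac12\bigl(P_{[e',(u\rightarrow v)]}+P_{[(u\rightarrow v),e']}\bigr)$ and invoking Definition~\ref{de:nbtm}, the $\mathbf{P}$ part survives only for source edges $e'=(x\rightarrow u)$ with $x\in N(u)$, $x\neq v$ (each contributing $\tfrac{1}{d(u)-1}$), while the $\mathbf{P}^T$ part survives only for $e'=(v\rightarrow w)$ with $w\in N(v)$, $w\neq u$ (each contributing $\tfrac{1}{d(v)-1}$). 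Hence $(\mathbf{g}^T\mathbf{\overline{P}})_{(u\rightarrow v)}=\tfrac{1}{2(d(u)-1)}\sum_{x\in N(u),\,x\neq v}g_{x\rightarrow u}+\tfrac{1}{2(d(v)-1)}\sum_{w\in N(v),\,w\neq u}g_{v\rightarrow w}$. Forming the out-sum $\sum_{v\in N(u)}(\cdot)$, the first block collapses cleanly: each incoming embedding $g_{x\rightarrow u}$ is counted for exactly the $d(u)-1$ indices $v\neq x$, so the factor $d(u)-1$ cancels the denominator and this block equals $\tfrac12 g^{in}_u$. The second block, using $\sum_{w\neq u}g_{v\rightarrow w}=g^{out}_v-g_{v\rightarrow u}$, equals $\sum_{v\in N(u)}\tfrac{1}{2(d(v)-1)}\bigl(g^{out}_v-g_{v\rightarrow u}\bigr)$. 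This yields the exact identity $(\mathbf{g}^T\mathbf{\overline{P}})^{out}_u=\tfrac12 g^{in}_u+\sum_{v\in N(u)}\tfrac{1}{2(d(v)-1)}g^{out}_v-\sum_{v\in N(u)}\tfrac{1}{2(d(v)-1)}g_{v\rightarrow u}$.

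The only obstruction to expressing this purely through the aggregates is the last sum, which still references the individual backtracking embeddings $g_{v\rightarrow u}$. Here I would apply the mean-field replacement $g_{v\rightarrow u}\approx\tfrac{1}{d(u)}g^{in}_u$, i.e. substitute each incoming edge embedding by the average of all incoming embeddings at $u$; this is the place where smoothness of the small eigenvectors of $\widetilde{\mathbf{\mathcal{L}}}$ makes the deviation small. The substitution turns the last sum into $\bigl(\sum_{v\in N(u)}\tfrac{1}{2(d(v)-1)}\tfrac{1}{d(u)}\bigr)g^{in}_u$, which recombines with $\tfrac12 g^{in}_u$ to produce exactly the asserted coefficient $\tfrac12-\sum_{v\in N(u)}\tfrac{1}{2(d(v)-1)}\tfrac1{d(u)}$ on $g^{in}_u$, leaving $\sum_{v\in N(u)}\tfrac{1}{2(d(v)-1)}g^{out}_v$ untouched; this is precisely the claimed expression for $(\mathbf{g}^T\mathbf{Q})^{out}_u$.

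It remains to package this substitution as an entrywise-bounded perturbation. The incurred error is $\sum_{v\in N(u)}\tfrac{1}{2(d(v)-1)}\bigl(g_{v\rightarrow u}-\tfrac1{d(u)}g^{in}_u\bigr)$, a linear functional of $\mathbf{g}$ supported on the incoming edges of $u$, which I would realize as $(\mathbf{g}^T\mathbf{\Delta})^{out}_u$ by distributing, for each neighbor $v$, the weight $\tfrac{1}{2(d(v)-1)}$ with the averaging factor $\tfrac1{d(u)}$ across the genuine length-two non-backtracking transitions $[(x\rightarrow u),(u\rightarrow v)]$ through $u$ — exactly the positions where $\overline{P}\neq0$ (so $\mathbf{\Delta}$ vanishes wherever $\overline{P}$ does). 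Each such entry then has magnitude at most $\tfrac{1}{2(d(v)-1)}\cdot\tfrac{1}{d(u)}$, and since $\tfrac1{d(u)}\le\tfrac{1}{d(u)-1}$ this is bounded by $\tfrac{1}{2(d(u)-1)(d(v)-1)}$, matching the stated bound. The main obstacle is the bookkeeping of the exact step: correctly tracking the two distinct exclusion patterns coming from $\mathbf{P}$ and $\mathbf{P}^T$ under symmetrization and verifying the $d(u)-1$ cancellation that produces the clean $\tfrac12 g^{in}_u$ term; a secondary delicacy is confining $\mathbf{\Delta}$ to the support of $\overline{P}$ while keeping the per-entry bound, which is what the ``$\approx$'' absorbs once the exact identity is established.
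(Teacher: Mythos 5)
Your derivation matches the paper's proof in Appendix A.5 step for step: the exact expansion of $(\mathbf{g}^T\mathbf{\overline{P}})^{out}_{u}$ via the two halves of $\mathbf{\overline{P}}=\tfrac12(\mathbf{P}+\mathbf{P}^T)$, the collapse of the incoming block to $\tfrac12 g^{in}_u$, the rewriting of the outgoing block as $\sum_{v\in N(u)}\tfrac{1}{2(d(v)-1)}(g^{out}_v-g_{v\rightarrow u})$, the mean-field substitution $g_{v\rightarrow u}\approx\tfrac{1}{d(u)}g^{in}_u$, and the same entrywise bound read off for $\mathbf{\Delta}$. The only differences are cosmetic (your bookkeeping of the two exclusion patterns is in fact cleaner than the paper's Eq. (A.5), and both arguments leave the final packaging of the residual into a $\mathbf{\Delta}$ supported on $\overline{P}$'s nonzeros at the same informal level), so this is essentially the paper's proof.
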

\begin{proof}
The proof is given in the Appendix A.5.
\end{proof}
Likewise, for \emph{in-sum} operation, $(\mathbf{g}^T\mathbf{\overline{P}})^{in}_{u}\approx(\mathbf{g}^T\mathbf{Q})^{in}_{u}=(\frac{1}{2}-\sum_{v\in N(u)}\frac{1}{2(d(v)-1)}\frac{1}{d(u)})g^{out}_u +\sum_{v\in N(u)}\frac{1}{2(d(v)-1)}g^{in}_v $.
After removing constant factors and transforming this formula into a matrix form, we have
\begin{equation}
\mathbf{T}=\left[
\begin{aligned}
&\mathbf{J}&\mathbf{ I}-\mathbf{D}^{-1}\mathbf{J}\\
& \mathbf{I}-\mathbf{D}^{-1}\mathbf{J}&\mathbf{J}
\end{aligned}
\right],
\end{equation}
where $\mathbf{I}$ is the identity matrix and $\mathbf{J}=\mathbf{A}(\mathbf{D}-\mathbf{I})^{-1}$. By switching the order of spectral decomposition and summation, the approximation target is achieved. Now, our graph approximation algorithm NOBE-GA is just directly selecting the second to the $k+1$ largest eigenvectors of $\mathbf{T}$ as our embedding vectors. As these eigenvectors of $2n$ dimensions have \emph{in-sum} and \emph{out-sum} embedding parts, consistent with NOBE, we simply choose \emph{in-sum} part as the final node embeddings.

To prove the approximation guarantee of NOBE-GA, we first introduce some basic notations from spectral graph theory. For a matrix $\mathbf{A}$, we write $\mathbf{A}\succcurlyeq 0$, if $\mathbf{A}$ is positive semi-definite, Similarly, we write $\mathbf{A}\succcurlyeq \mathbf{B}$, if $\mathbf{A}-\mathbf{B}\succcurlyeq 0$, which is also equivalent to $\mathbf{v}^T\mathbf{A}\mathbf{v} \succcurlyeq \mathbf{v}^T\mathbf{B}\mathbf{v}$, for all $\mathbf{v}$. For two graphs $\mathcal{G}$ and $\mathcal{H}$ with the same node set, we denote $\mathcal{G}\succcurlyeq \mathcal{H}$ if their Laplacian matrix $\mathbf{L}_\mathcal{G}\succcurlyeq \mathbf{L}_\mathcal{H}$. Recall that $\mathbf{x}^T\mathbf{L}_\mathcal{G}\mathbf{x}=\sum_{(u,v)\in E}W_{\mathcal{G}}(u,v)(x(u)-x(v))^2$, where $W_{\mathcal{G}}(u,v)$ denotes an item in weighted adjacency matrix of $\mathcal{G}$. It is clear that dropping edges will decrease the value of this quadratic form. Now, we define the approximation between two graphs based on the difference of their Laplacian matrices.
\begin{Definition}
\label{de:AG}
(\textbf{c-approximation graph}) For some $c>1$, a graph $\mathcal{H}$ is called a $c$-approximiation graph of graph $\mathcal{G}$, if $c\mathcal{H}\succcurlyeq \mathcal{G}\succcurlyeq \frac{\mathcal{H}}{c}.$
\end{Definition}
Based on the Definition \ref{de:AG}, we present the Theorem \ref{theorem:AG}, which further shows the relationship of $\mathcal{G}$ and its $c$-approximation graph $\mathcal{H}$ in terms of their eigenvalues.
\vspace{-5pt}
\begin{theorem}
\label{theorem:AG}
If $\mathcal{H}$ is a $c$-approximation graph of graph $\mathcal{G}$, then
$$
|\lambda_k(\mathcal{G})-\lambda_k(\mathcal{H})|\leq max\{(c-1),(1-\frac{1}{c})\}\lambda_k(\mathcal{G}),
$$
where $\lambda_k(\cdot)$ is the k-th smallest eigenvalue of the corresponding graph.
\end{theorem}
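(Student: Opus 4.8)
The plan is to prove the theorem directly from the definition of a $c$-approximation graph together with the variational (Courant–Fischer / Rayleigh quotient) characterization of eigenvalues, since this is the standard route for comparing spectra of two positive semi-definite matrices whose difference is controlled. First I would recall that for a real symmetric matrix (here the Laplacians $\mathbf{L}_\mathcal{G}$ and $\mathbf{L}_\mathcal{H}$, which are symmetric and positive semi-definite), the $k$-th smallest eigenvalue admits the min–max representation
\begin{equation*}
\lambda_k(\mathcal{G})=\min_{\dim(S)=k}\ \max_{\mathbf{x}\in S\setminus\{0\}}\ \frac{\mathbf{x}^T\mathbf{L}_\mathcal{G}\mathbf{x}}{\mathbf{x}^T\mathbf{x}},
\end{equation*}
and identically for $\mathcal{H}$, where $S$ ranges over $k$-dimensional subspaces.

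The key step is to transport the Loewner-order inequalities $c\mathcal{H}\succcurlyeq\mathcal{G}\succcurlyeq\frac{1}{c}\mathcal{H}$ through this variational formula. Unwinding the hypothesis, $\mathcal{G}\succcurlyeq\frac{1}{c}\mathcal{H}$ means $\mathbf{x}^T\mathbf{L}_\mathcal{G}\mathbf{x}\ge \frac{1}{c}\,\mathbf{x}^T\mathbf{L}_\mathcal{H}\mathbf{x}$ for every $\mathbf{x}$, and $c\mathcal{H}\succcurlyeq\mathcal{G}$ means $\mathbf{x}^T\mathbf{L}_\mathcal{G}\mathbf{x}\le c\,\mathbf{x}^T\mathbf{L}_\mathcal{H}\mathbf{x}$ for every $\mathbf{x}$. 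Because these pointwise inequalities between the quadratic forms hold for \emph{all} $\mathbf{x}$, they survive being plugged into both the inner maximum and the outer minimum over subspaces; I would argue that taking the same optimizing subspace for one graph and estimating the quotient for the other yields $\frac{1}{c}\lambda_k(\mathcal{H})\le\lambda_k(\mathcal{G})\le c\,\lambda_k(\mathcal{H})$, i.e. monotonicity of each $\lambda_k$ under the Loewner order. This monotonicity is the crux of the argument.

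From $\frac{1}{c}\lambda_k(\mathcal{H})\le\lambda_k(\mathcal{G})\le c\,\lambda_k(\mathcal{H})$, I would next convert the two-sided bound on $\lambda_k(\mathcal{H})$ into a bound on the difference $|\lambda_k(\mathcal{G})-\lambda_k(\mathcal{H})|$ measured relative to $\lambda_k(\mathcal{G})$. The right inequality gives $\lambda_k(\mathcal{H})\ge\frac{1}{c}\lambda_k(\mathcal{G})$, hence $\lambda_k(\mathcal{G})-\lambda_k(\mathcal{H})\le(1-\frac{1}{c})\lambda_k(\mathcal{G})$; the left inequality gives $\lambda_k(\mathcal{H})\le c\,\lambda_k(\mathcal{G})$, hence $\lambda_k(\mathcal{H})-\lambda_k(\mathcal{G})\le(c-1)\lambda_k(\mathcal{G})$. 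Taking the larger of the two one-sided bounds produces exactly $|\lambda_k(\mathcal{G})-\lambda_k(\mathcal{H})|\le\max\{(c-1),(1-\frac{1}{c})\}\,\lambda_k(\mathcal{G})$, which is the claim. (Note that since $c>1$ we have $c-1\ge 1-\frac{1}{c}$, so the max is in fact $c-1$; I would keep the symmetric form to match the statement.)

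I do not expect a serious obstacle here, as the result is essentially an application of Weyl/Courant–Fischer monotonicity, but the step deserving the most care is the transfer through the min–max: one must make sure the \emph{same} extremal subspace is used correctly on both sides and that the direction of each inequality is preserved when it is substituted into the inner maximization (where we want an upper or lower bound on a Rayleigh quotient) and then into the outer minimization over subspaces. A clean way to sidestep any subtlety is to invoke the general fact that $\mathbf{M}\succcurlyeq\mathbf{N}\succcurlyeq 0$ implies $\lambda_k(\mathbf{M})\ge\lambda_k(\mathbf{N})$ for every $k$ (monotonicity of eigenvalues under the Loewner order), and apply it to the two inclusions $c\mathbf{L}_\mathcal{H}\succcurlyeq\mathbf{L}_\mathcal{G}$ and $\mathbf{L}_\mathcal{G}\succcurlyeq\frac{1}{c}\mathbf{L}_\mathcal{H}$ directly, using $\lambda_k(c\mathbf{L}_\mathcal{H})=c\,\lambda_k(\mathbf{L}_\mathcal{H})$.
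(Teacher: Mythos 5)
Your proposal is correct and follows essentially the same route as the paper: both invoke the Courant--Fischer min--max characterization, push the Loewner-order inequalities $c\mathbf{L}_\mathcal{H}\succcurlyeq\mathbf{L}_\mathcal{G}\succcurlyeq\frac{1}{c}\mathbf{L}_\mathcal{H}$ through the Rayleigh quotient to obtain $\frac{1}{c}\lambda_k(\mathcal{H})\leq\lambda_k(\mathcal{G})\leq c\,\lambda_k(\mathcal{H})$, and then rearrange to bound the difference. Your write-up is in fact slightly more explicit than the paper's (which ends with ``easy math will give the final result''), and your observation that the max is always $c-1$ for $c>1$ is a correct small addition.
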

\begin{proof}
The proof is given in the Appendix A.4.
\end{proof}
To relax the strict conditions in Definition \ref{de:AG}, we define a probabilistic version of the $c$-approximation graph by using element-wise constraints.
\begin{Definition}
\label{de:delta-AG}
(\textbf{$(c,\eta)$-approximation graph}) For some $c>1$, a graph $\mathcal{H}=(V,\boldsymbol{\cdot},W_{\mathcal{H}})$ is called a $(c,\eta)$-approximiation graph of graph $\mathcal{G}=(V,\boldsymbol{\cdot},W_{\mathcal{G}}))$, if  $cW_\mathcal{H}(u,v)\leq W_\mathcal{G}(u,v)\leq \frac{1}{c}W_\mathcal{H}(u,v)$ is satisfied with probability at least $1-\eta$.
\end{Definition}
A probabilistic version of Theorem \ref{theorem:AG} follows accordingly. At last, we claim that matrix $\mathbf{Q}$ approximates matrix $\mathbf{\overline{P}}$ well by the following Theorem, which means the approximation of NOBE-GA is adequate.
\begin{theorem}
\label{theorem:bound}
Suppose that the degree of the original graph $\mathcal{G}$ obeys Possion distribution with parameter $\lambda$, i.e., $d\sim \pi(\lambda)$. Then, for some small $\delta$, graph $\mathcal{\widetilde{H}}=(\vec{E},\boldsymbol{\cdot},\mathbf{Q})$ is a $(c,\eta)$-approximation graph of the graph $\mathcal{H}=(\vec{E},\boldsymbol{\cdot},\mathbf{\overline{P}})$, where $c$ is $1+\delta$, and $\eta$ is $\frac{1}{\delta}[\frac{1}{\lambda-1}+\frac{\lambda}{(\lambda-1)^3}]$.
\end{theorem}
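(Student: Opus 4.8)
The plan is to reduce this operator-level claim to an entry-wise relative-error bound, and then convert that bound into the stated probability via Markov's inequality together with a delta-method estimate of a Poisson expectation. First I would pin down the non-zero entries of $\overline{\mathbf{P}}=\frac{\mathbf{P}+\mathbf{P}^T}{2}$. Starting from Definition \ref{de:nbtm}, a transition $(x\to u)\to(u\to v)$ is admissible exactly when $x\neq v$, in which case $P_{[(x\to u),(u\to v)]}=\frac{1}{d(u)-1}$ while the reverse entry $P_{[(u\to v),(x\to u)]}$ vanishes (the ``cannot be nonzero at the same time'' remark preceding the statement). Hence every non-zero weight of $\mathcal{H}=(\vec E,\boldsymbol{\cdot},\overline{\mathbf{P}})$ equals $\overline{P}_{[(x\to u),(u\to v)]}=\frac{1}{2(d(u)-1)}$. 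Combining this with the per-entry bound $|\mathbf{\Delta}_{[(x\to u),(u\to v)]}|\le \frac{1}{2(d(u)-1)(d(v)-1)}$ from Lemma \ref{lemma: noise}, the relative perturbation of each edge weight is controlled purely by the far endpoint:
$$
\frac{|\mathbf{\Delta}_{[(x\to u),(u\to v)]}|}{\overline{P}_{[(x\to u),(u\to v)]}}\le \frac{1}{d(v)-1}.
$$

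Next I would translate Definition \ref{de:delta-AG} into this language. Since $\mathbf{Q}=\overline{\mathbf{P}}+\mathbf{\Delta}$, the two-sided weight inequality with $c=1+\delta$ holds on a given edge as soon as its relative perturbation is at most $\delta$ (for small $\delta$ the difference between $\delta$ and $\frac{\delta}{1+\delta}$ is negligible and can be absorbed into the ``small $\delta$'' regime of the statement). Therefore the event that the $(c,\eta)$-condition fails on an edge is contained in $\{\frac{1}{d(v)-1}>\delta\}$, which depends only on the single random degree $d=d(v)$. Treating $d\sim\pi(\lambda)$ and applying Markov's inequality to the non-negative variable $\frac{1}{d-1}$ gives
$$
\Pr\!\left[\frac{1}{d-1}>\delta\right]\le \frac{1}{\delta}\,\mathbb{E}\!\left[\frac{1}{d-1}\right].
$$

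Finally I would estimate the Poisson expectation by a second-order delta-method expansion of $f(d)=\frac{1}{d-1}$ about the mean $\lambda$, using $\mathbb{E}[d]=\mathrm{Var}(d)=\lambda$. With $f(\lambda)=\frac{1}{\lambda-1}$ and $f''(\lambda)=\frac{2}{(\lambda-1)^3}$ one obtains $\mathbb{E}[\frac{1}{d-1}]\approx f(\lambda)+\tfrac12 f''(\lambda)\,\lambda=\frac{1}{\lambda-1}+\frac{\lambda}{(\lambda-1)^3}$, which reproduces exactly $\eta=\frac{1}{\delta}\big[\frac{1}{\lambda-1}+\frac{\lambda}{(\lambda-1)^3}\big]$. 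Together with the reduction above, this certifies that $\widetilde{\mathcal{H}}=(\vec E,\boldsymbol{\cdot},\mathbf{Q})$ is a $(1+\delta,\eta)$-approximation of $\mathcal{H}$, and Theorem \ref{theorem:AG} (in its probabilistic form) then upgrades this weight approximation into the promised eigenvalue guarantee.

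The main obstacle I anticipate is controlling the expectation $\mathbb{E}[\frac{1}{d-1}]$ rigorously: $f(d)=\frac{1}{d-1}$ is singular at $d=1$, and the delta method only supplies a heuristic estimate, so one must lean on the minimum-degree-$2$ preprocessing guaranteed by Proposition \ref{pro:md2} (ensuring $d-1\ge 1$) and argue that, for moderately large $\lambda$, the small-degree tail and the truncated higher-order Taylor terms are dominated by the two displayed terms. A secondary subtlety is the passage from an element-wise weight bound to an approximation of the whole operator; here I would emphasize that Definition \ref{de:delta-AG} is itself element-wise, so no global spectral argument beyond the per-entry estimate is needed to establish the $(c,\eta)$-approximation, and the spectral consequence is deferred entirely to Theorem \ref{theorem:AG}.
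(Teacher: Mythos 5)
Your proposal follows essentially the same route as the paper's own proof: the entry-wise relative error $\frac{|\mathbf{\Delta}|}{\overline{P}}\le\frac{1}{d(v)-1}$, Markov's inequality on $\frac{1}{d(v)-1}$, and a second-order Taylor (delta-method) estimate of $\mathbb{E}[\frac{1}{d-1}]$ using $\mathbb{E}[d]=\mathrm{Var}(d)=\lambda$, yielding exactly $\frac{1}{\lambda-1}+\frac{\lambda}{(\lambda-1)^3}$. Your closing caveats (the uncontrolled Taylor remainder near $d=1$ and the $\delta$ versus $\frac{\delta}{1+\delta}$ slack) are in fact points the paper's proof glosses over, so the proposal is, if anything, slightly more candid about the same argument.
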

\begin{proof}
The proof is given in the Appendix A.6.
\end{proof}
\subsection{Time Complexity and Discussion}
A sparse implementation of our algorithm in Matlab is publicly available\footnote{https://github.com/Jafree/NOnBacktrackingEmbedding}. For $k$-dimensional embedding, summation operation of NOBE requires $O(mk)$ time. Eigenvector computation, which utilizes a variant of Lanczos algorithm, requires $O(\widetilde{t}k)$ time, where $\widetilde{t}=\sum_v2d(v)[d(v)-1]$. In total, the time complexity of NOBE is $O(\widetilde{t}k+mk)$. The time complexity of NOBE-GA is $O(nk\overline{d})$, where $\overline{d}$ is the average degree.

Classic spectral method based on $\mathbf{D^{-1}A}$ is just a reduced version of NOBE (proof in Appendix A.7). For sparse and degree-skewed networks, nodes with a large degree will affect many related eigenvectors. Therefore, the previous leading eigenvector that corresponds to community structure will be lost in the bulk of worthless eigenvectors, and hence fail to preserve meaningful structures \cite{krzakala2013spectral}. However, our spectral framework with non-backtracking strategy can overcome this issue.
\section{Experimental Results}
\label{sec:Experiments}
In this section, we first introduce datasets and compared methods used in the experiments. After that, we present empirical evaluations on clustering and structural hole spanner detection in detail.
\subsection{Dataset Description}
All networks used here are undirected, which are publicly available on SNAP dataset platform \cite{snapnets}. They vary widely from a range of characteristics such as network type, network size and community profile. They include three social networks: \emph{karate} (real), \emph{youtube} (online), \emph{enron-email} (communication); three collaboration networks: \emph{ca-hepth}, \emph{dblp}, \emph{ca-condmat} (bipartite of authors and publications); three entity networks: \emph{dolphins} (animals), \emph{us-football} (organizations), \emph{polblogs} (hyperlinks). The summary of the datasets is shown in Table \ref{tab:datasets}. Specifically, we apply a community detection algorithm, i.e., RanCom \cite{jiang2015fast}, to show the detected community number and maximum community size.

\subsection{Compared Methods}
We compare our methods with the state-of-the-art algorithms. The first three are graph embedding methods. The others are SH spanner detection methods . We summarize them as follows:
\begin{itemize}
\setlength{\itemsep}{2pt}
\setlength{\parsep}{2pt}
\setlength{\parskip}{2pt}
\item \textbf{NOBE}, \textbf{NOBE-GA}: Our spectral graph embedding method and its graph approximation version.
\item \textbf{node2vec} \cite{grover2016node2vec}: A feature learning framework extending Skip-gram architecture to networks.

\item \textbf{LINE} \cite{tang2015line}: The version of combining first-order and second-order proximity is used here. 

\item \textbf{Deepwalk} \cite{perozzi2014deepwalk}: Truncated random walk and language modeling techniques are utilized.

\item \textbf{HAM} \cite{he2016joint}: A harmonic modularity function is proposed to tackle the SH spanner detection.

\item \textbf{Constraint} \cite{burt2009structural}: A constraint introduced to prune nodes with certain connectivity being candidates.

\item \textbf{Pagerank} \cite{page1999pagerank}: Nodes with highest pagerank score will be selected as SH spanners.

\item \textbf{Betweenness Centrality} (BC) \cite{brandes2001faster}: Nodes with highest BC will be selected as SH spanners.

\item \textbf{HIS} \cite{lou2013mining}: Designing a two-stage information flow model  to optimize the provided objective function.

\item \textbf{AP\_BICC} \cite{rezvani2015identifying}: Approximate inverse closeness centralities and articulation points are exploited.
\end{itemize}
\begin{table}[tbp]
\scriptsize
\caption{\bf Summary of experimental datasets and their community profiles.}
\centering 
\begin{tabular}{|C{1.55cm}|C{1cm}C{1cm}C{1.4cm}C{1.4cm}|}

\hline
\multirow{2}{*}{}&\multicolumn{2}{c}{\textbf{Characteristics}}&\textbf{\#Community}&\textbf{\#Max members}\\
\cline{2-5}Datasets&\# Node&\# Edge&\emph{RankCom}&\emph{RankCom}\\
\hline
\emph{karate}&34&78&2   & 18  \\

\emph{dolphins}&62&159&3   & 29 \\

\emph{us-football}&115&613&11   & 17  \\

\emph{polblogs}&1,224&19,090&7  & 675  \\

\emph{ca-hepth}&9,877&25,998&995  & 446  \\

\emph{ca-condmat}&23,133&93,497&2,456  & 797  \\

\emph{email-enron}&36,692&183,831&3,888  & 3,914  \\

\emph{youtube}&334,863&925,872&15,863  & 37,255 \\

\emph{dblp}&317,080&1,049,866&25,633  & 1,099 \\

\hline
\end{tabular}
\label{tab:datasets}
\end{table}
\vspace{-10pt}
\subsection{Performance on Clustering}
\begin{table*}[htbp]
\scriptsize
\caption{
\bf{Performance on Clustering evaluated by Modularity and Permanence$_{(rank)}$}}
\centering 
\begin{tabular}{C{1.3cm}|C{1.1cm}||C{0.7cm}C{1.2cm}C{0.9cm}C{0.7cm}C{1.2cm}||C{0.8cm}C{0.9cm}C{1.1cm}C{1.1cm}C{1.1cm}}
\hline
\multirow{2}{*}{}&&\multicolumn{5}{c}{Modularity}&\multicolumn{5}{c}{Permanence}\\

\cline{3-12}Datasets& Clustering Methods&\textbf{NOBE}&\textbf{NOBE-GA}&\textbf{node2vec}&\textbf{LINE}&\textbf{Deepwalk}&\textbf{NOBE}&\textbf{NOBE-GA}&\textbf{node2vec}&\textbf{LINE}&\textbf{Deepwalk} \\

\hline
\multirow{2}{*}{\emph{karate}}&\emph{k-means}& \textbf{0.449}(1)& \textbf{0.449}(1)& 0.335(5)&  0.403(3)&0.396(4)& \textbf{0.350}(1)& \textbf{0.350}(1)& 0.335(4)&  0.182(5)&\textbf{0.350}(1)  \\
&\emph{AM}& \textbf{0.449}(1)& \textbf{0.449}(1)& 0.335(4)&  0.239(5)& 0.430(3)& \textbf{0.356}(1)& 0.350(2)& 0.205(5)&  0.232(4)& 0.311(3)\\
\cline{1-2}

\multirow{2}{*}{\emph{dolphins}}&\emph{k-means}& 0.510(2)& \textbf{0.522}(1)& 0.460(3)&  0.187(5)&0.401(4) & 0.250(2)& \textbf{0.268}(1)& 0.196(3)&  -0.166(5)&0.187(4)\\
&\emph{AM}& 0.514(2)&\textbf{ 0.522}(1)& 0.458(3)&  0.271(5)& 0.393(4)& 0.233(2)&\textbf{0.249}(1)& 0.132(4)&  -0.189(5)& 0.189(3)\\
\cline{1-2}

\multirow{2}{*}{\emph{us-footbal}}&\emph{k-means}& 0.610(2)& \textbf{0.611}(1)& 0.605(3)& 0.562(4) & 0.464(5)& \textbf{0.321}(1)& \textbf{0.321}(1)& 0.304(3)& 0.311(2) & 0.039(5)\\
&\emph{AM}& \textbf{0.612}(1)& 0.609(2)& 0.589(3)& 0.492(4) & 0.464(5)& \textbf{0.330}(1)& \textbf{0.330}(1)& 0.279(4)& 0.307(3) & 0.039(5)\\
\cline{1-2}

\multirow{2}{*}{\emph{ca-hepTh}}&\emph{k-means}& \textbf{0.639}(1)& 0.609(2)& 0.597(3)& 0.01(5)& 0.424(4)& \textbf{0.412}(1)& 0.337(3)& 0.379(2)& -0.948(5) & 0.261(4)\\
&\emph{AM}& \textbf{0.635}(1)& 0.614(2)& 0.606(3)& 0.05(5)& 0.453(4)& \textbf{0.435}(1)& 0.416(2)& 0.406(3)& -0.949(5) & 0.338(4)\\
\cline{1-2}

\multirow{2}{*}{\emph{condmat}}&\emph{k-means}& \textbf{0.515}(1)& 0.495(3)&\textbf{0.515}(1) &  0(5)& 0.357(4)& \textbf{0.330}(1)& 0.288(3)&\textbf{0.330}(1)&  -0.984(5)& 0.197(4)\\
&\emph{AM}& \textbf{0.528}(1)& 0.502(3)& 0.520(2)& 0(5) & 0.370(4)& \textbf{0.391}(1)& 0.327(3)& 0.388(2)& -0.994(5) & 0.249(4)\\
\cline{1-2}

\multirow{2}{*}{\emph{enron-email}}&\emph{k-means}& 0.219(2)& \textbf{0.221}(1)& 0.213(3)&  0(5)& 0.178(4)& 0.096(2)& \textbf{0.153}(1)& 0.080(3)&  -0.985(5)& 0.049(4)\\
&\emph{AM}& 0.215(3)& \textbf{0.220}(1)& 0.218(2)& 0(5) & 0.207(4)& 0.120(3)& \textbf{0.194}(1)& 0.180(2)& -0.996(5) & 0.108(4)\\
\cline{1-2}

\multirow{2}{*}{\emph{polblogs}}&\emph{k-means}& \textbf{0.428}(1)&\textbf{0.428}(1)& 0.357(3)&  0.200(4)& 0.084(5)& \textbf{0.138}(1)&0.136(2)& -0.066(3)&  -0.569(5)& -0.187(4)\\
&\emph{AM}& \textbf{0.428}(1)& 0.427(2)& 0.376(3)&  0.266(4)& 0.065(5)& \textbf{0.138}(1)& 0.132(2)& -0.096(3)&  -0.509(5)& -0.176(4)\\
\cline{1-2}

\hline
\end{tabular}
\label{tab:modularity_community}
\end{table*}
Clustering is an important unsupervised application used for automatically separating data points into clusters. Our graph embedding method is used for embedding nodes of a graph into vectors, on which clustering method can be directly employed. Two evaluation metrics considered are summarized as follows:
\begin{itemize}
\setlength{\itemsep}{3pt}
\setlength{\parsep}{3pt}
\setlength{\parskip}{3pt}

\item \textbf{Modularity} \cite{newman2006modularity}: Modularity is a widely used quantitative metric that measures the likelihood of nodes' community membership under the perturbation of the Null model. Mathematically,
$
Q=\frac{1}{2m}\sum_{vw}[A_{vw}-\frac{d(v)\cdot d(w)}{2m}]\delta (c_v,c_w),
$
where $\delta$ is the indicator function. $c_v$ indicates the community node $v$ belongs to. In practice, we add a penalty if a clearly wrong membership is predicted.

\item \textbf{Permanence} \cite{chakraborty2014permanence}: It is a vertex-based metric, which depends on two factors: internal clustering coefficient and maximum external degree to other communities. The permanence of a node $v$ that belongs to community $c$ is defined as follows:
$
Perm_c(v) = [\frac{I_c(v)}{E^c_{max}(v)}\times \frac{1}{d(v)}]-[1-C^c_{in}(v)],
$
where $I_c(v)$ is the internal degree. $E^c_{max}(v)$ is the maximum degree that node $v$ links to another community. $C^c_{in}(v)$ is the internal clustering coefficient. Generally, positive permanence indicates a good community structure. To penalize apparently wrong community assignment, $Perm_c(v)$ is set to $-1$, if $d(v)<2E^c_{max}(v)$.
\end{itemize}

For the clustering application, we summary the performance of our methods, i.e., NOBE and NOBE-GA, against three state-of-the-art embedding methods on seven datasets in terms of modularity and permanence in Table \ref{tab:modularity_community}. Two types of classic clustering methods are used, i.e., k-means and agglomerative method (AM). From these results, we have the following observations:
\begin{figure}[b]
\begin{center}
\includegraphics[width=\linewidth]{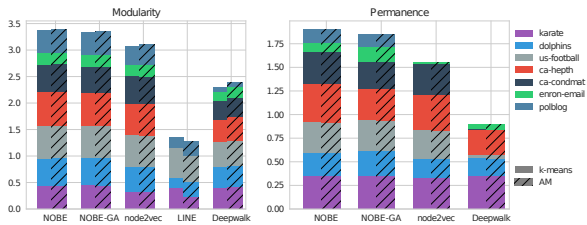}
\end{center}
\caption{
{ The overall performance on clustering in terms of modularity and permanence.}
}
\label{fig:performance_bar}
\end{figure}

\begin{itemize}
\item In terms of modularity and permanence, NOBE and NOBE-GA outperform other graph embedding methods over all datasets under both k-means and AM. Positive permanence scores on all datasets indicate that meaningful community structure is discovered. Specifically, node2vec obtains competing results on \emph{condmat} under k-means. As for LINE, it fails to predict useful community structure on most large datasets except \emph{karate} and \emph{us}-\emph{football}. Deepwalk gives mediocre results on most datasets and bad results on \emph{us}-\emph{football}, \emph{polblogs} and \emph{enron}-\emph{email} under k-means.

\item Figure \ref{fig:performance_bar} reports the overall performance. NOBE and NOBE-GA achieves superior embedding performance for the clustering application. Moreover, it practically demonstrates that NOBE-GA approximates NOBE very well on various kinds of networks despite the difference on their link density, node preferences and community profiles. To our surprise, on some datasets NOBE-GA even achieve slightly better performance than NOBE. We conjecture that this improvement arises because of the introducing of the randomization and the preferences of evaluation metrics. Specifically, in terms of modularity, the percentage of improvement margin of NOBE is $8\%$ over node2vec, $139\%$ over LINE and $39\%$ over Deepwalk. Regarding to permanence, the percentage of the improvement margin of NOBE is $16\%$ over node2vec and $100\%$ over Deepwalk. 

\end{itemize}
\normalsize
\subsection{Performance on Structural Hole Spanner Detection}

\begin{table*}[htbp]
\footnotesize
\caption{
\bf{Structural hole spanner detection results under linear threshold and independent cascade influence models}}
\centering 
\begin{tabular}{cccccccccc}
\hline
\multirow{3}{*}{}& &\multicolumn{7}{c}{Comparative Methods}\\
\cline{4-10}
Datasets&\#SH Spanners&Influence Model&NOBE&HAM&Constraint&PageRank&BC&HIS&AP\_BICC \\
\hline
\multirow{3}{*}{karate}&\multirow{3}{*}{3}&LT&\bf 0.595&0.343&0.295&0.159&0.159&0.132&0.295  \\
& &IC&\bf 0.003&0.002&0.002&0.001&0.001&0.001&0.002  \\
\cline{3-10}
& &SH spanners&[3 20 14]&[3 20 9]&[1 34 3]&[34 1 33]&[1 34 33]&[32 9 14]&[1 3 34]\\
\hline
\multirow{2}{*}{youtube}&\multirow{2}{*}{78}&LT&\bf 4.664&3.951&2.447&1.236&1.226&3.198&1.630  \\
& &IC&\bf 4.375&2.452&1.254&0.662&0.791&2.148&0.799  \\
\cline{3-10}
\multirow{2}{*}{dblp}&\multirow{2}{*}{42}&LT&\bf 8.734&5.384&0.404&0.357&0.958&0.718&0.550  \\
& &IC&\bf 7.221&3.578&0.229&0.190&0.821&0.304&0.495  \\
\hline
\end{tabular}
\label{tab:sh}
\end{table*}

Generally speaking, in a network, structural hole (SH) spanners are the nodes bridging between different communities, which are crucial for many applications such as diffusion controls, viral marketing and brain functional analysis \cite{bassett2009cognitive,lou2013mining,burt2009structural}.  Detecting these bridging nodes is a non-trivial task. To exhibit the power of our embedding method in placing key nodes into accurate positions, we first employ our method to embed the graph into low-dimensional vectors and then detect structural hole spanners in that subspace. We compare our method with SH spanner detection algorithms that are directly applied on graphs. To evaluate the quantitative quality of selected SH spanners, we use a evaluation metric called Structural Hole Influence Index (SHII) proposed in \cite{he2016joint}. This metric is designed by simulating information diffusion processes under certain information diffusion models in the given network. 

\begin{itemize}
\item \textbf{Structural Hole Influence Index} (SHII) \cite{he2016joint}: Regarding a SH spanner candidate $v$, we compute its SHII score by performing the influence maximization process several times.  For each time, to activate the influence diffusion process, we randomly select a set of nodes $S_v$ from the community $C_v$ that $v$ belongs to. Node $v$ and node set $S_v$ is combined as seed set to propagate the influence. After the propagation, SHII score is obtained by computing the relative difference between the number of activated nodes in the community $C_v$ and in other communities: 
$
SHII(v,S_v)=\frac{\sum_{C_i\in \mathcal{C} \backslash C_v} \sum_{u\in C_i}I_u}{\sum_{u\in C_v}I_u},
$
where $\mathcal{C}$ is the set of communities. $I_u$ is the indicator function which equals one if node $u$ is influenced, otherwise $0$.
\end{itemize}

For each SH spanner candidate, we run the information diffusion under linear threshold model (LT) and independent cascade model (IC) 10000 times to get average SHII score. To generate SH spanner candidates from embedded subspace, in which our embedding vectors lie, we devise a metric for ranking nodes:
\begin{itemize}
\item \textbf{Relative Deviation Score} (RDS): Suppose that for each node $v \in V$, its low-dimensional embedding vector is represented as $\mathbf{y}_v \in \mathbb{R}^k$. We apply k-means to separate nodes into appropriate clusters with $\mathcal{C}$ denoting cluster set. For a cluster $C\in \mathcal{C}$, the mean of its points is $\mathbf{u}_C=\frac{1}{|C|}\sum_{i\in C} \mathbf{y}_i$. The Relative Deviation Score, which measures how far a data point is deviating from its own community attracted by other community, is defined as:
$$
\small
RDS(v)=\max_{C\in \mathcal{C}} \frac{\parallel \mathbf{y}_v-\mathbf{u}_{C_v}\parallel_2/R_{C_v}}{\parallel\mathbf{y}_v-\mathbf{u}_{C}\parallel_2/R_C}
$$
\normalsize
where $C_v$ denotes the cluster $v$ belongs to. And $R_C=\sum_{i\in C}\parallel \mathbf{y}_i - \mathbf{u}_C\parallel_2$ indicates the radius of cluster $C$.
\end{itemize}

In our low-dimensional space, nodes with highest RDS will be selected as candidates of SH spanners. We summarize our embedding method against other SH spanner detection algorithms in Table \ref{tab:sh}. Due to space limit, we omit the results of other embedding methods as they totally fail on this task. The number of SH spanners shown in the second column is chosen based on the network size and community profile. Actually, too many SH spanners will lead to the propagation activating the entire network. We outperform all SH spanner detection algorithms under LT and IC models on all three datasets. Specifically, on \emph{karate} network, we identify three SH spanners, i.e., 3, 20 and 14, which can be regarded as a perfect group that can influence both clusters, seen from Figure \ref{fig:karate}. On average, our method NOBE achieves a significant $66\%$ improvement against state-of-the-art algorithm HAM, which shows the power of our method in accurate embedding.

\vspace{-5pt}
\begin{figure}[hbt]
\begin{subfigure}{\linewidth}
\centering
\includegraphics[width=\linewidth]{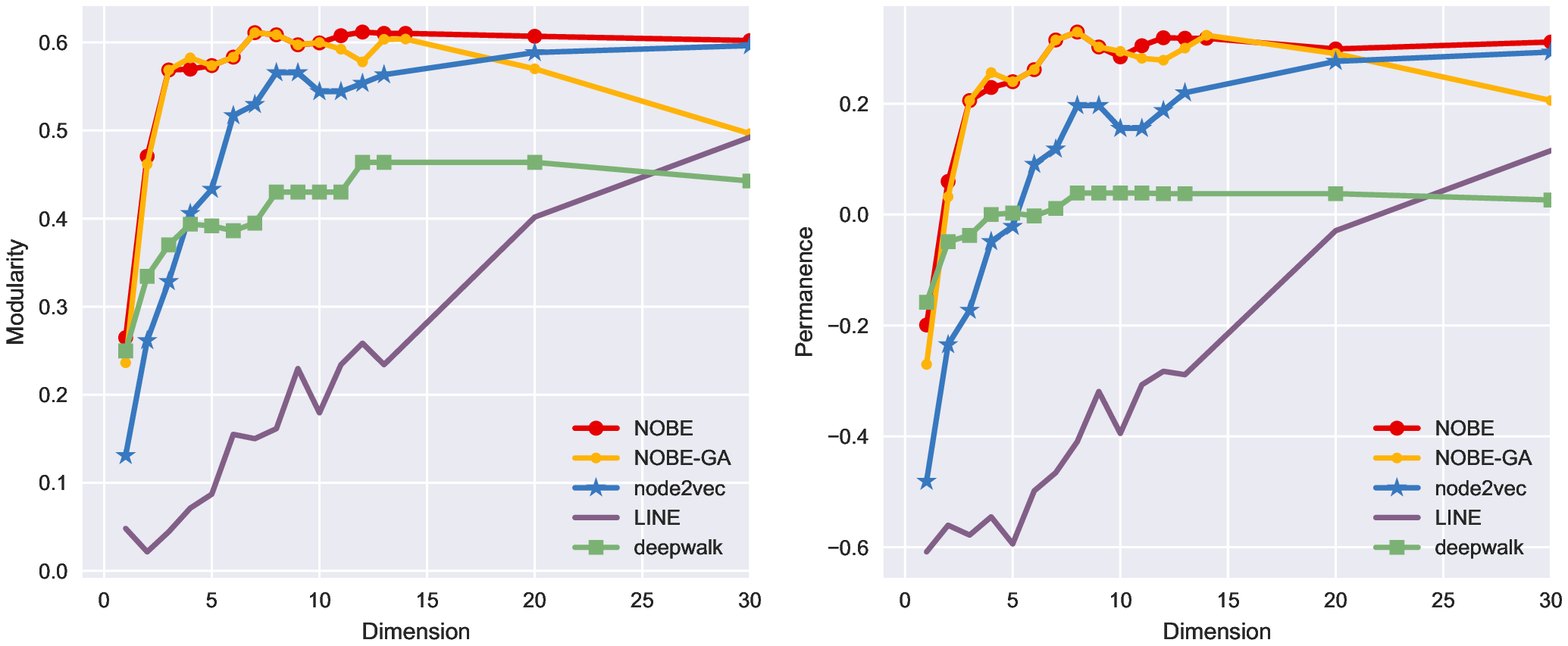}
\vspace{-20pt}
\caption{\emph{us-football}}
\label{fig:football_dimension}
\end{subfigure}
\centering
\begin{subfigure}{\linewidth}
\centering
\includegraphics[width=\linewidth]{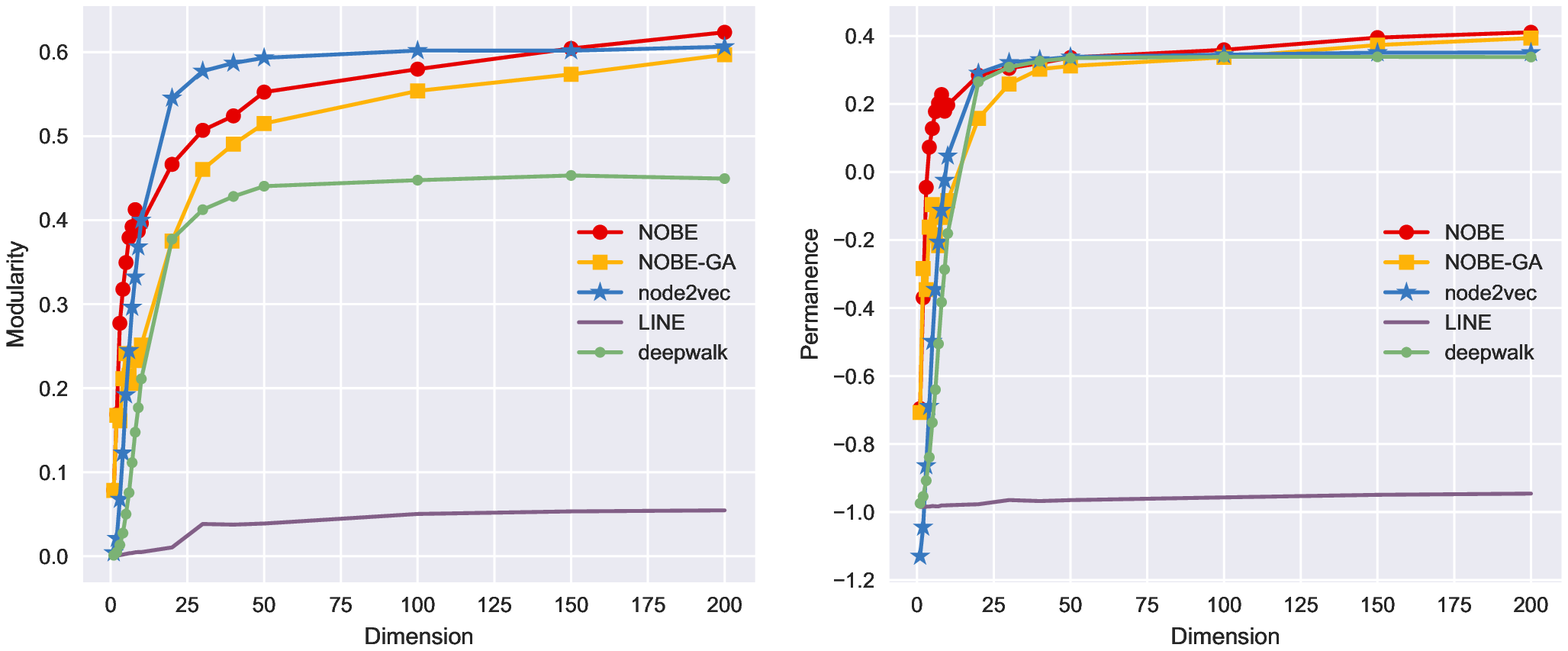}
\vspace{-20pt}
\caption{\emph{ca-hepth}}
\label{fig:hepth_dimension}
\end{subfigure}
\vspace{-5pt}
\caption{
{Parameter Analysis of Dimension under AM.}
}
\label{fig:dimension}
\end{figure}

\subsection{Parameter Analysis}
Dimension is usually considered as a intrinsic characteristic, and often needs to be artificially predefined. With varying dimension, we report the clustering quality under AM on two datasets in Figure \ref{fig:dimension}. On \emph{football} network with 11 ground truth communities, NOBE, NOBE-GA and node2vec achieves reasonable results on dimension $k=7$ or $8$. After $k=11$, NOBE, NOBE-GA, node2vec and deepwalk begin to drop. Followed by a sudden drop, node2vec still increases gradually. Reported by RankCom, \emph{ca}-\emph{hepth} network has 995 communities. Nevertheless, prior to dimension $k=50$, NOBE, NOBE-GA, node2vec and deepwalk have already obtained community structure with good quality. The performance will slightly increase afterwards. Consistent with studies on spectral analysis of graph matrices \cite{shen2010}, community number is a good choice for dimension. However, it's also rather conservative, since good embedding methods could preserve great majority of graph information in much shorter vectors. The choice of a large number greater than community number should be cautious since redundant information added may deteriorate embedding results .

\section{Conclusion and Outlook}
\label{sec:Conclusion}
This paper proposes NOBE, a novel framework leveraging the non-backtracking strategy for graph embedding. It exploits highly nonlinear structure of graphs by considering a non-Markovian dynamics. As a result, it can handle both macroscopic and microscopic tasks. Experiments demonstrate the superior advantage of our algorithm over state-of-the-art baselines. In addition, we carry out a graph approximation technique with theoretical guarantees for reducing the complexity and also for analyzing the different flows on graphs. To our surprise, NOBE-GA achieves excellent performance at the same level as NOBE.

We hope that our work will shed light on the analysis of algorithms based on flow dynamics of graphs, especially on spectral algorithms. Graph approximation can be further investigated by considering the perturbation of eigenvectors. We leave it for future work.

\section*{Acknowledgement}
This work is supported in part by National Key R\& D Program of China through grants 2016YFB0800700, and NSF through grants IIS-1526499, and CNS-1626432, and NSFC 61672313, 61672051, 61503253, and NSF of Guangdong Province 2017A030313339.
 
\bibliographystyle{siamplain}
\bibliography{references}
\appendix
\section{Appendix}
\subsection{}
\textbf{Proposition 3.1} If the minimum degree of the connected graph $\mathcal{G}$ is at least 2, then the oriented line graph $\mathcal{H}$ is valid and strongly connected.
\begin{proof}
Assume that $(w\rightarrow x)$ and $(u\rightarrow v)$ are two arbitrary nodes in the oriented line graph $\mathcal{H}$. The proposition is equivalent to prove that $(u\rightarrow v)$ can be reached from $(w\rightarrow x)$. Three situations should be considered:

\noindent
1) if $x=u$ and $w\neq v$, then $(w\rightarrow x)$ is directly linked to $(u\rightarrow v)$; 

\noindent
2) if $w=v$ and $x\neq u$ which means there is a directed edge from $(u\rightarrow v)$ to $(w\rightarrow x)$. We delete the node $w$, i.e., node $v$, in the original graph $\mathcal{G}$. Since the minimum degree of $\mathcal{G}$ is at least two. Therefore, node $u$ and node $x$ are still mutually reachable in graph $\mathcal{G}$. A Hamilton Path $\mathbf{p}$  from node $x$ to node $u$ can be selected with passing through other existing nodes only once, which satisfies the non-backtracking condition. Adding node $w$, i.e., node $v$, back into the graph $\mathcal{G}$ will generate a non-backtracking path $(w\rightarrow\mathbf{p}\rightarrow v)$, which means $(u\rightarrow v)$ is reachable from $(w\rightarrow x)$ in the oriented line graph $\mathcal{H}$;

\noindent
3) if $w,x,u,v$ are mutually unequal. Assume that we delete edges $(w,x)$ and $(u,v)$ in graph $\mathcal{G}$, then graph $G$  is still connected. There exists a Hamilton path $\mathbf{p}$ connecting node $x$ and $u$. Thus, with $(w\rightarrow\mathbf{p}\rightarrow v)$ satisfying the non-backtracking condition, there exists a directed path connecting node $(w\rightarrow x)$ and node $(u\rightarrow v)$ in the oriented line graph $\mathcal{H}$.

\noindent
Overall, every valid node in graph $\mathcal{H}$ can be reached, if graph $\mathcal{G}$ has a minimum degree at least two.
\end{proof}

\subsection{}
\noindent
\textbf{Proposition 3.2} Our loss function is 
\begin{equation*}
\min_{\mathbf{y}} \mathbf{y}^T(\mathbf{\Phi} - \frac{\mathbf{\Phi} \mathbf{P}+\mathbf{P}^T\mathbf{\Phi}}{2})\mathbf{y}.
\end{equation*}

\begin{proof}
To be concise, we use $V(\mathcal{H})$, $E(\mathcal{H})$ denote the node set and the edge set of $\mathcal{H}$ separately. $u$,$v$ denote nodes in $V(\mathcal{H})$. Considering every node pair, we have the following loss function
\small
\begin{equation*}
\sum\limits_{u,v \in V(\mathcal{H})}\{\phi(u)[(y(u)-y(v))^2 P(u,v)]+\phi(v)[(y(v)-y(u))^2 P(v,u)]\}.
\end{equation*}
\normalsize
Dividing the term into two parts by regarding each node pair as ordered and expanding the formula, we get
\small
\begin{equation}
\label{eq:two-term}
\begin{aligned}
&\frac{1}{2}\sum\limits_{u\in V(\mathcal{H})}\sum\limits_{(u,v)\in E(\mathcal{H})}[\phi(u)(y(u)^2+y(v)^2-2y(u)y(v)) P(u,v)] \\
+&\frac{1}{2}\sum\limits_{u\in V(\mathcal{H})}\sum\limits_{(v,u)\in E(\mathcal{H})}[\phi(v)(y(u)^2+y(v)^2-2y(u)y(v)) P(v,u)]
\end{aligned}
\end{equation}
\normalsize
Then, we do the deduction for the first part. A similar proof can be applied to the second part. Enumerating each node in the first part, we get
\begin{equation}
\label{eq:firstterm}
\begin{aligned}
&\frac{1}{2}\sum\limits_{u\in V(\mathcal{H})}\phi(u)y(u)^2\sum\limits_{v:(u,v)\in E(\mathcal{H})}p(u,v)\\
&+\frac{1}{2}\sum\limits_{v\in V(\mathcal{H})}y(v)^2\sum\limits_{u:(u,v)\in E(\mathcal{H})}(\phi(u) P(u,v))\\
&-\frac{1}{2}\sum\limits_{v:(u,v)\in E(\mathcal{H})}\phi(u)(2y(u)y(v)P(u,v))
\end{aligned}
\end{equation}
Due to proposition 3.3, the sum of each row is equal to one, i.e., $$\sum\limits_{v:(u,v)\in E(\mathcal{H})}P(u,v)=1.$$ The first part of above equation becomes 
\begin{equation*}
\frac{1}{2}\mathbf{y}^T\mathbf{\Phi} \mathbf{y}.
\end{equation*}
Here, $\mathbf{\Phi}$ is a diagonal matrix with $\Phi(i,i)=\phi(i).$
Since $\mathbf{\phi}^T \mathbf{P}=\mathbf{\phi}^T$, so the second sum in the second term of equation \ref{eq:firstterm} becomes $$\sum\limits_{u:(u,v)\in E(\mathcal{H})}(\phi(u) P(u,v))=\phi(v).$$ Then, the matrix form of the second term in equation \ref{eq:firstterm} becomes $$\frac{1}{2}\mathbf{y}^T\mathbf{\Phi} \mathbf{y}$$
Arranging the terms in a particular order, we can easily see the third part in equation \ref{eq:firstterm}
$$
-\frac{1}{2}\sum\limits_{v:(u,v)\in E(\mathcal{H})}2y(u)\phi(u)P(u,v)y(v)=\mathbf{y}^T\mathbf{\Phi} \mathbf{P} \mathbf{y}.
$$
Adding up all terms and removing the constant factor, we get our loss function
$$
\min_{\mathbf{y}} \mathbf{y}^T(\mathbf{\Phi} - \frac{\mathbf{\Phi} \mathbf{P}+\mathbf{P}^T\mathbf{\Phi}}{2})\mathbf{y}.
$$
\end{proof}

\subsection{}
\textbf{Proposition 3.3} Both the sums of rows and columns of the non-backtracking transition matrix $\mathbf{P}$ equal one.
\begin{proof}
The proof is simple, since concerning each row or column, the values of nonzero items are equal. For an arbitrary row related to node $(u\rightarrow v)$, the sum of this row in the non-backtracking matrix $\mathbf{P}$ is 
$$
\begin{aligned}
\sum\limits_{\substack{x\in N(v) \\ x\neq u}} \mathbf{P}_{[(u\rightarrow v),(v\rightarrow x)]}&=\sum\limits_{\substack{x\in N(v) \\ x\neq u}} \frac{1}{d(v)-1}\\
&= \frac{1}{d(v)-1}\sum\limits_{\substack{x\in N(v) \\ x\neq u}}1 =1
\end{aligned}
$$
Similarly, we can get the same result for each column.
\end{proof}

\subsection{}
\textbf{Theorem 3.1} 
If $\mathcal{G}$ and $\mathcal{H}$ are graphs such that $\mathcal{H}$ is a $c$-approximation graph of graph $\mathcal{G}$, then
$$
|\lambda_k(\mathcal{G})-\lambda_k(\mathcal{H})|\leq max\{(c-1),(1-\frac{1}{c})\}\lambda_k(\mathcal{G}),
$$
where $\lambda_k$ is the k-th smallest eigenvalue of corresponding graph.

\begin{proof}
\label{proof:AG}
Applying Courant-Fisher Theorem, we have
$$
\lambda_k(\mathcal{G})=\min\limits_{\substack{S\subseteq R^n\\ dim(S)=k}} \max\limits_{\mathbf{x}\in S}\frac{\mathbf{x}^T\mathbf{L}_\mathcal{G}\mathbf{x}}{\mathbf{x}^T\mathbf{x}}.
$$
As $\mathcal{H}$ is a $c$-approximation graph of $\mathcal{G}$, then $\mathbf{L}_\mathcal{G}\succcurlyeq \frac{1}{c}\cdot \mathbf{L}_\mathcal{H}$. So, we have
$$
\mathbf{x}^T\mathbf{L}_\mathcal{G}\mathbf{x}\geq \frac{1}{c}\mathbf{x}^T\mathbf{L}_\mathcal{H}\mathbf{x}.
$$
Then, it becomes to
\begin{equation}
\begin{aligned}
\lambda_k(\mathcal{G})&=\min\limits_{\substack{S\subseteq R^n\\ dim(S)=k}} \max\limits_{\mathbf{x}\in S}\frac{\mathbf{x}^T\mathbf{L}_\mathcal{G}\mathbf{x}}{\mathbf{x}^T\mathbf{x}}\geq \min\limits_{\substack{S\subseteq R^n\\ dim(S)=k}} \max\limits_{\mathbf{x}\in S}\frac{\frac{1}{c}\mathbf{x}^T\mathbf{L}_\mathcal{H}\mathbf{x}}{\mathbf{x}^T\mathbf{x}}\\
&=\frac{1}{c}\min\limits_{\substack{S\subseteq R^n\\ dim(S)=k}} \max\limits_{\mathbf{x}\in S}\frac{\mathbf{x}^T\mathbf{L}_\mathcal{H}\mathbf{x}}{\mathbf{x}^T\mathbf{x}}=\frac{1}{c}\lambda_k(\mathcal{H}).
\end{aligned}
\end{equation}
Similarly, we can get $c\lambda_k(\mathcal{H})\geq \lambda_k(\mathcal{G})$. In other words, $c\lambda_k(\mathcal{G})\geq \lambda_k(\mathcal{H}) \geq \frac{1}{c}\lambda_k(\mathcal{G})$. Easy math will give the final result.
\end{proof}

\subsection{}
\textbf{Lemma 3.1}
There exists a $2m\times 2m$ matrix $\mathbf{Q}=\mathbf{\overline{P}}+\mathbf{\Delta}$, such that for arbitrary node $x$, $u$, $v\in V$, if $\overline{P}_{[(x\rightarrow u),(u\rightarrow v)]}\neq 0$, then $|\mathbf{\Delta}_{[(x\rightarrow u),(u\rightarrow v)]}|\leq \frac{1}{2(d(u)-1))(d(v)-1)}$ otherwise 0. Moreover, $(\mathbf{g}^T\mathbf{\overline{P}})^{out}_{u}$  can be approximated as $(\mathbf{g}^T\mathbf{\overline{P}})^{out}_{u}\approx(\mathbf{g}^T\mathbf{Q})^{out}_{u}=(\frac{1}{2}-\sum_{v\in N(u)}\frac{1}{2(d(v)-1)}\frac{1}{d(u)})g^{in}_u+\sum_{v\in N(u)}\frac{1}{2(d(v)-1)}g^{out}_v$.

\begin{proof}
\normalsize
Considering the vector $\mathbf{g}$ as the information contained on each directed edge in graph $\mathcal{G}$, from the definition of \emph{out} operation and the non-backtracking transition matrix, $(\mathbf{g}^T\mathbf{\overline{P}})^{out}_{u}$ is equivalent to a process that first applying one step random walk with probability transition matrix $\mathbf{\overline{P}}$ to update the vector $\mathbf{g}$, and then conducting $out$ operation on node $u$ in graph $\mathcal{G}$. So, we can get
\begin{equation}
\footnotesize
\label{eq:gout}
\begin{aligned}
&(\mathbf{g}^T\mathbf{\overline{P}})^{out}_{u}
=\sum\limits_{v\in N(u)}(\mathbf{g}^T\mathbf{\overline{P}})_{u\rightarrow v}\\
&=\sum\limits_{v\in N(u)}(\sum\limits_{\substack{ x\in N(u) \\ x\neq v}}\frac{1}{2(d(u)-1)}g_{x\rightarrow u}+\sum\limits_{\substack{y\in N(u) \\ y\neq u}}\frac{1}{2(d(v)-1)}g_{v\rightarrow u})
\end{aligned}
\end{equation}
\normalsize
For the first part in the second equation of equation \ref{eq:gout}, by separating the non-backtracking part and switching the summation we have
\begin{equation}
\small
\label{eq:gout1}
\begin{aligned}
&\sum\limits_{v\in N(u)}\sum\limits_{\substack{ x\in N(u) \\ x\neq v}}\frac{1}{2(d(u)-1)}g_{x\rightarrow u}\\
&=\sum\limits_{v\in N(u)}(\sum\limits_{ x\in N(u)}\frac{1}{2(d(u)-1)}g_{x\rightarrow u}-\frac{1}{2(d(u)-1}g_{v\rightarrow u})\\
&=(\sum\limits_{v\in N(u)}\frac{1}{2(d(u)-1)})\sum\limits_{ x\in N(u)}g_{x\rightarrow u}-\frac{1}{2(d(u)-1}\sum\limits_{v\in N(u)}g_{v\rightarrow u}\\
&=(\frac{d(u)}{2(d(u)-1)}-\frac{1}{2(d(u)-1)})g_{u}^{in}\\
&=\frac{1}{2}g^{in}_u
\end{aligned}
\end{equation}
For the second part of equation \ref{eq:gout}, we have
\begin{equation}
\label{eq:gout2}
\small
\begin{aligned}
&\sum\limits_{v\in N(u)}\sum\limits_{\substack{y\in N(u) \\ y\neq u}}\frac{1}{2(d(v)-1)}g_{v\rightarrow u}\\
&=\sum\limits_{v\in N(u)}\frac{1}{2(d(v)-1)}\sum\limits_{\substack{y\in N(v) \\ y\neq u}}g_{v\rightarrow y}\\
&=\sum\limits_{v\in N(u)}\frac{1}{2(d(v)-1)}(g^{out}_v -g_{v\rightarrow u})\\
&\approx\sum\limits_{v\in N(u)}\frac{1}{2(d(v)-1)}g^{out}_v -\sum\limits_{v\in N(u)}\frac{1}{2(d(v)-1)}\frac{1}{d(u)}g^{in}_u 
\end{aligned}
\end{equation}
The approximation in the third step adopts an idea from mean field theory that we assume that every incoming edge to a fixed node has the same amount of probability. Thus, to give an unbiased estimation, we use the mean of other edges coming into node $u$, i.e., $\frac{1}{d(u)-1}(g^{in}_u-g_{v\rightarrow u})$, to approximate the $g_{v\rightarrow u}$ when going through $(v\rightarrow u)$ is prohibited by non-backtracking strategy. Note that for a neighbor $x\in N(v)$, $\mathbf{\overline{P}}_{[(x\rightarrow v, v\rightarrow u)]}\leq \frac{1}{2(d(v)-1)}$. By above approximation, matrix $\mathbf{Q}$ is obtained with $\mathbf{Q}=\mathbf{\overline{P}}+\mathbf{\Delta}$, and the bound of the approximation error is $|\mathbf{\Delta}_{[(x\rightarrow v, v\rightarrow u)]}|\leq \frac{1}{2(d(v)-1)(d(u)-1)}$. To sum up equation \ref{eq:gout1} and \ref{eq:gout2}, we have
$$
\begin{aligned}
(\mathbf{g}^T\mathbf{\overline{P}})^{out}_{u}&\approx(\mathbf{g}^T\mathbf{Q})^{out}_{u}=(\frac{1}{2}-\sum\limits_{v\in N(u)}\frac{1}{2(d(v)-1)}\frac{1}{d(u)})g^{in}_u \\
&+\sum\limits_{v\in N(u)}\frac{1}{2(d(v)-1)}g^{out}_v
\end{aligned}
$$
The lemma holds.
\end{proof}

\subsection{}
\textbf{Theorem 3.2}
Suppose that the degree of the original graph $\mathcal{G}$ obeys Possion distribution with parameter $\lambda$, i.e., $d\sim \pi(\lambda)$. Then, for arbitrary small $\delta$, graph $\mathcal{\widetilde{H}}=(\vec{E},\boldsymbol{\cdot},\mathbf{Q})$ is a $(c,\eta)$-approximation graph of the graph $\mathcal{H}=(\vec{E},\boldsymbol{\cdot},\mathbf{\overline{P}})$, where $c$ is $1+\delta$, and $\eta$ is $\frac{1}{\delta}[\frac{1}{\lambda-1}+\frac{\lambda}{(\lambda-1)^3}]$.

\begin{proof}
To investigate the relationship between $\mathcal{\widetilde{H}}$ and $\mathcal{H}$, we first consider the relative difference between $\mathbf{\overline{P}}$ and $\mathbf{Q}$. According to lemma 3.1, for an arbitrary nonzero item $\mathbf{\overline{P}}_{[(x\rightarrow u),(u\rightarrow v)]}$ in $\mathbf{\overline{P}}$, we have
$$\mathbf{\overline{P}}_{[(x\rightarrow u),(u\rightarrow v)]}=\frac{1}{2(d(u)-1)}.$$
The maximum value of the corresponding item in $\mathbf{\Delta}$ is 
$$|\mathbf{\Delta}_{[(x\rightarrow u),(u\rightarrow v)]}|=\frac{1}{2(d(v)-1)(d(u)-1)}.$$ 
So, the relative difference between $\mathbf{\overline{P}}$ and $\mathbf{Q}$ is  
$$\frac{|\mathbf{\Delta}_{[(x\rightarrow u),(u\rightarrow v)]}|}{\mathbf{\overline{P}}_{[(x\rightarrow u),(u\rightarrow v)]}} = \frac{1}{d(v)-1}.$$
Due to the arbitrary choice of node $u$ and $v$, we regard the values concerning $u$ and $v$ as random variables. Thus, after applying Markov inequality, we get $$Pr[\frac{1}{d(v)-1}\geq\delta]\leq \frac{1}{\delta}E[\frac{1}{d(v)-1}].$$
Note that due to the convexity of the reciprocal function, applying Jensen's inequality only gives us the lower bound. To get an upper bound of $E[\frac{1}{d(v)-1}]$, we set random variable $X = d(v)-1$ and $Y=\frac{1}{X}$, one can use the Taylor series expansion around $E[X]$:
\begin{equation}
\begin{aligned}
\small
E[Y]=E[\frac{1}{X}]\leq &E\{\frac{1}{E[X]}-\frac{1}{E^2[X]}(X-E[X])\\
&+\frac{1}{E^3[X]}(X-E[X])^2\}\\
=&\frac{E^2[X]+Var[X]}{E^3[X]}\\
=&\frac{(\lambda-1)^2+\lambda}{(\lambda-1)^3} \text{ \emph{(Due to Possion distribution)}}\\
=&\frac{1}{\lambda-1}+\frac{\lambda}{(\lambda-1)^3}
\end{aligned}
\end{equation}
So the upper bound that the relative difference ratio between corresponding elements in $\mathbf{\overline{P}}$ and $\mathbf{Q}$, which is caused by the approximation, is as follows:
$$Pr[\frac{1}{d(v)-1}\geq\delta]\leq \frac{1}{\delta}(\frac{1}{\lambda-1}+\frac{\lambda}{(\lambda-1)^3}).$$
Set $\eta=\frac{1}{\delta}[\frac{1}{\lambda-1}+\frac{\lambda}{(\lambda-1)^3}]$. Then, graph $\mathcal{\widetilde{H}}$ is $(1+\delta,\eta)$-approximation graph of graph $\mathcal{H}$.
\end{proof}

\subsection{}
\textbf{Claim 1} Spectral method based on lazy random walk is a reduced version of our proposed algorithm NOBE.
\begin{proof}

\normalsize
The proof is similar to Lemma 3.1. The detail of the approximation strategy used here is different. Again, regarding the vector $\mathbf{g}$ as the information contained on each directed edge in graph $\mathcal{G}$, $(\mathbf{g}^T\mathbf{\overline{P}})^{out}_{u}$ is equivalent to a process that first applying one step random walk with probability transition matrix $\mathbf{\overline{P}}$ to update the vector $\mathbf{g}$, and then conducting $out$ operation on node $u$ in graph $\mathcal{G}$. So, we can get

\footnotesize
\begin{equation}
\label{eq:goutlazy}
\begin{aligned}
&(\mathbf{g}^T\mathbf{\overline{P}})^{out}_{u}
=\sum\limits_{v\in N(u)}(\mathbf{g}^T\mathbf{\overline{P}})_{u\rightarrow v}\\
&=\sum\limits_{v\in N(u)}(\sum\limits_{\substack{ x\in N(u) \\ x\neq v}}\frac{1}{2(d(u)-1)}g_{x\rightarrow u}+\sum\limits_{\substack{y\in N(u) \\ y\neq u}}\frac{1}{2(d(v)-1)}g_{v\rightarrow u})
\end{aligned}
\end{equation}
\normalsize
For the first part in the second equation of equation \ref{eq:goutlazy}, by separating non-backtracking part and switching the summation we have
\begin{equation}
\label{eq:goutlazy1}
\small
\begin{aligned}
&\sum\limits_{v\in N(u)}\sum\limits_{\substack{ x\in N(u) \\ x\neq v}}\frac{1}{2(d(u)-1)}g_{x\rightarrow u}\\
&=(\sum\limits_{v\in N(u)}\frac{1}{2(d(u)-1)})\sum\limits_{ x\in N(u)}g_{x\rightarrow u}-\frac{1}{2(d(u)-1}\sum\limits_{v\in N(u)}g_{v\rightarrow u}\\
&=\frac{1}{2}g^{in}_u
\end{aligned}
\end{equation}
For the second part of equation \ref{eq:goutlazy}, we have
\begin{equation}
\label{eq:goulazy2}
\small
\begin{aligned}
&\sum\limits_{v\in N(u)}\sum\limits_{\substack{y\in N(u) \\ y\neq u}}\frac{1}{2(d(v)-1)}g_{v\rightarrow u}\\
&=\sum\limits_{v\in N(u)}\frac{1}{2(d(v)-1)}\sum\limits_{\substack{y\in N(v) \\ y\neq u}}g_{v\rightarrow y}\\
&=\sum\limits_{v\in N(u)}\frac{1}{2(d(v)-1)}(g^{out}_v -g_{v\rightarrow u})\\
&\approx\sum\limits_{v\in N(u)}\frac{1}{2(d(v)-1)}g^{out}_v -\sum\limits_{v\in N(u)}\frac{1}{2(d(v)-1)}\frac{1}{d(v)}g^{out}_v\\
&= \sum\limits_{v\in N(u)}\frac{d(v)-1}{2(d(v)-1)d(v)}g^{out}_v =\sum\limits_{v\in N(u)}\frac{1}{2d(v)}g^{out}_v 
\end{aligned}
\end{equation}
The approximation happened on the third step of the above equation by using $\frac{1}{d(v)}g^{out}_v$ to approximate $g_{v\rightarrow u}$. This approximation is straightforward, since it allows one to diffuse information without inspecting the information collecting from which node. To sum up equation \ref{eq:goutlazy1} and equation \ref{eq:goulazy2}, we have
\begin{equation}
\label{eq: lazylastout}
\begin{aligned}
(\mathbf{g}^T\mathbf{\overline{P}})^{out}_{u}\approx \frac{1}{2}g^{in}_u+\sum\limits_{v\in N(u)}\frac{1}{2d(v)}g^{out}_v 
\end{aligned}
\end{equation}

\noindent
Likewise, we can get 
\begin{equation}
\label{eq: lazylastin}
\begin{aligned}
(\mathbf{g}^T\mathbf{\overline{P}})^{in}_{u}\approx\sum\limits_{v\in N(u)}\frac{1}{2d(v)}g^{in}_v + \frac{1}{2}g^{out}_u
\end{aligned}
\end{equation}
Assume that $\mathbf{R}$ is a $2n$ by $2n$ matrix, such that
\begin{equation}
\label{eq:lazyR}
\left(
\begin{aligned}
&(\mathbf{g}^T\mathbf{\overline{P}})^{in} \\
&(\mathbf{g}^T\mathbf{\overline{P}})^{out}
\end{aligned}
\right)
=\left(
\mathbf{R}
\left(
\begin{aligned}
&\mathbf{g}^{in} \\
&\mathbf{g}^{out}
\end{aligned}
\right)
\right)
\end{equation}
This implies that if this equation holds, under the approximation described above, the spectral method is reduced from a spectral decomposition on the $2m\times 2m$ matrix $\mathbf{\overline{P}}$ to that on the $2n \times 2n$ matrix $\mathbf{R}$ by dropping and approximating some information. According to equation \ref{eq:lazyR}, we organize equation \ref{eq: lazylastout} and \ref{eq:lazyR} into a matrix form, we get 
\begin{equation}
\label{eq:Rexplicit}
\mathbf{R}=\left[
\begin{aligned}
&\mathbf{D}^{-1}\mathbf{A}&\mathbf{I}\\
&\mathbf{I}&\mathbf{D}^{-1}\mathbf{A}
\end{aligned}
\right],
\end{equation}
where $\mathbf{I}$ is the identity matrix. Literally, eigenvector decomposition on $\mathbf{R}$ and then selecting values corresponding to \emph{in} component is equivalent to a conventional spectral method on lazy random walk. We set $\mathbf{y}^T=(\mathbf{y}_{in}^T, \mathbf{y}_{out}^T)$ as an eigenvector of matrix $\mathbf{R}$ with eigenvalue $\lambda$. Then, it becomes
\begin{equation}
\label{eq:equation_set}
\left\{
\begin{aligned}
&\mathbf{y}_{out}+\mathbf{D}^{-1}\mathbf{A}\mathbf{y}_{in}=\lambda\mathbf{y}_{in}\\
&\mathbf{y}_{in}+\mathbf{D}^{-1}\mathbf{A}\mathbf{y}_{out}=\lambda\mathbf{y}_{out}
\end{aligned}
\right.
\end{equation}

\noindent
Consider two situations: 

\noindent
1)If $\mathbf{y}_{in}=\mathbf{y}_{out}$, substituting $\mathbf{y}_{out}$ using $\mathbf{y}_{in}$ into the first equation of the equation \ref{eq:equation_set}, we obtain $$(\mathbf{D}^{-1}\mathbf{A}+\mathbf{I})\mathbf{y}_{in}=\lambda\mathbf{y}_{in},$$
where $\mathbf{B}=\frac{1}{2}\mathbf{D}^{-1}\mathbf{A}+\frac{1}{2}\mathbf{I}$ is the transition probability matrix of lazy random walk, which has the same eigenvectors with the transition probability matrix on classic random walk. Therefore, the eigenvectors of the transition probability matrix of the lazy random walk are contained in the \emph{in} part of the eigenvectors of matrix $\mathbf{R}$.

\noindent
2)If $\mathbf{y}_{in}\neq \mathbf{y}_{out}$, substituting $\mathbf{y}_{out}$ from the first equation of equation \ref{eq:equation_set} into the second equation, we get
\begin{equation}
\label{eq:zeta_similar}
(\lambda^2\mathbf{I}+(\mathbf{D}^{-1}\mathbf{A})^2-2\lambda \mathbf{D}^{-1}\mathbf{A}-\mathbf{I})\mathbf{y}_{in}=\mathbf{0}.
\end{equation}
The solution of equation \ref{eq:zeta_similar} is the root of 
\begin{equation}
\label{eq:zeta_similar_det}
det[\lambda^2\mathbf{I}+(\mathbf{D}^{-1}\mathbf{A})^2-2\lambda \mathbf{D}^{-1}\mathbf{A}-\mathbf{I}]=\mathbf{0},
\end{equation}
where $det$ is short for determinant. Equation \ref{eq:zeta_similar_det} has the similar form as the well-known graph zeta function 
$$det[\lambda^2\mathbf{I}+\mathbf{D}-\lambda \mathbf{A}-\mathbf{I}]=\mathbf{0},$$
which facilitates many crucial problems on graphs.

We argue that our framework can offer great opportunities to explore flow dynamics on graphs. This proof shows the flexibility of our framework NOBE and the applicability of graph approximation technique. Spectral decomposition algorithm based on lazy random walk is a reduced version of our proposed algorithm NOBE. The claim holds. 

\end{proof}

\end{document}